\documentclass[a4paper,UKenglish]{lipics-v2018}

\usepackage{xspace}
\usepackage{microtype}
\usepackage{amsmath}
\usepackage{amsthm}
\usepackage{amssymb}
\usepackage{hyperref}
\usepackage{stmaryrd}
\usepackage{bbold}
\usepackage{tikz}
\usetikzlibrary{arrows,decorations.pathmorphing,backgrounds,positioning,fit,calc,automata}
\usetikzlibrary{trees}
\usetikzlibrary{shapes}

\usepackage[textwidth=2cm,textsize=small]{todonotes}


\bibliographystyle{plainurl}

\newcommand*{\eg}{e.g.\@\xspace}
\newcommand*{\ie}{i.e.\@\xspace}

\newcommand{\LRS}{\mathbf{LRS}}
\renewcommand{\aa}{{\bf a}}
\newcommand{\uu}{{\bf u}}
\newcommand{\vv}{{\bf v}}
\newcommand{\sequence}[1]{\langle #1_n \rangle_{n \in \N}}
\newcommand{\shuffleop}[1]{\text{shuffle}(#1)}

\newcommand{\Rat}{\mathbf{Rat}}
\newcommand{\PRat}{\mathbf{PolyRat}}
\newcommand{\Fin}{\mathbf{Fin}}
\newcommand{\Arith}{\mathbf{Arith}}
\newcommand{\Geo}{\mathbf{Geo}}

\newcommand{\shift}{\text{shift}}
\newcommand{\shuffle}{\text{shuffle}}

\newcommand{\CCRA}{\mathbf{CCRA}}
\newcommand{\LCRA}{\mathbf{LCRA}}
\newcommand{\WA}{\mathbf{WA}}

\newcommand{\DetWA}{\mathbf{DetWA}}
\newcommand{\oneWA}{\mathbf{1WA}}
\newcommand{\kWA}{\mathbf{kWA}}
\newcommand{\kplusoneWA}{\mathbf{(k+1)WA}}
\newcommand{\FinWA}{\mathbf{FinWA}}
\newcommand{\PolyWA}{\mathbf{PolyWA}}

\newcommand{\A}{\mathcal{A}}

\newcommand{\C}{\mathcal{C}}

\newcommand{\N}{\mathbb{N}}

\newcommand{\Q}{\mathbb{Q}}
\newcommand{\R}{\mathbb{R}}





\newcommand{\set}[1]{\left\{ #1 \right\}}

\newcommand{\sem}[1]{\llbracket #1\rrbracket}
\newcommand{\Runs}{Runs_{\A}(n)}

\newcommand{\X}{\mathcal{X}}
\newcommand{\Expr}{Expr(\X)}
\newcommand{\Subs}{Subs(\X)}

\title{A Robust Class of Linear Recurrence Sequences}


\author{Corentin Barloy}{{\'E}cole Normale Sup{\'e}rieure de Paris, France}{}{}{}

\author{Nathana{\"e}l Fijalkow}{CNRS, LaBRI, Bordeaux, France, and the Alan Turing Institute of data science, London, United Kingdom}{}{}{}

\author{Nathan Lhote}{University of Warsaw, Poland}{}{}{}

\author{Filip Mazowiecki}{LaBRI, Universit{\'e} de Bordeaux, France}{}{}{}

\authorrunning{C. Barloy et al.}

\Copyright{C. Barloy and N. Fijalkow and N. Lhote and F. Mazowiecki}

\subjclass{F.1.1 Models of Computation}

\keywords{linear recurrence sequences, weighted automata, cost-register automata}

\category{}

\relatedversion{}

\supplement{}

\funding{}


\EventEditors{John Q. Open and Joan R. Access}
\EventNoEds{2}
\EventLongTitle{42nd Conference on Very Important Topics (CVIT 2016)}
\EventShortTitle{CVIT 2016}
\EventAcronym{CVIT}
\EventYear{2016}
\EventDate{December 24--27, 2016}
\EventLocation{Little Whinging, United Kingdom}
\EventLogo{}
\SeriesVolume{42}
\ArticleNo{23}
\nolinenumbers 

\begin{document}

\maketitle

\begin{abstract}
We introduce a subclass of linear recurrence sequences which we call poly-rational sequences
because they are denoted by rational expressions closed under sum and product.
We show that this class is robust by giving several characterisations: 
polynomially ambiguous weighted automata, copyless cost-register automata, 
rational formal series, and linear recurrence sequences whose eigenvalues are roots of rational numbers.
\end{abstract}

\section{Introduction}
\label{sec:introduction}
The study of sequences of numbers originated in mathematics and has deep connections with many fields.
A prominent class of sequences is \emph{linear recurrence sequences}, such as the Fibonacci sequence
\[
0,1,1,2,3,5,8,13,\ldots
\]
Despite the simplicity of linear recurrence sequences many problems related to them remain open, 
and are the object of active research. 
In theoretical computer science the two main questions are:
\begin{itemize}
	\item How to finitely represent sequences?
	\item How to algorithmically analyse properties of sequences?
\end{itemize}

In this paper we focus on problems related to the first question. 
The question of representation has led to important insights in the structure of linear recurrence sequences
by giving several equivalent characterisations, some of which we briefly review here.
We refer to Section~\ref{sec:rat} and the next sections for technical definitions.

\subparagraph*{Linear recurrence sequences}
A sequence of real numbers $\uu = \sequence{u} = \langle u_0, u_1, u_2, \ldots \rangle$ is a linear recurrence system (LRS) if there exist real numbers $a_1, \ldots, a_k$ such that 
for all $n \ge 0$
\begin{align}\label{eq:lrs}
u_{n+k} \; = \; a_1 u_{n+k-1} + \ldots + a_k u_{n}.
\end{align}
In this paper we will consider only sequences of rational numbers, therefore, we additionally assume that $a_i$ are rational numbers. 
The smallest $k$ for which $\uu$ satisfies an equation of the form~\eqref{eq:lrs} is called the order of $\uu$. 
The Fibonacci sequence $\sequence{F}$ is an LRS of order~$2$ satisfying the recurrence $F_{n+2} = F_{n+1} + F_n$.

\subparagraph*{Rational expressions}
Studying the closure properties of linear recurrence sequences yields the following result,
an instance of the Kleene-Sch\"{u}tzenberger theorem~\cite{Schutzenberger61b}:
linear recurrence sequences form the smallest class of sequences containing the sequences $\langle a,0,0,\ldots \rangle$ for a rational number $a$
and closed under sum, Cauchy product, and Kleene star.

\subparagraph*{Weighted automata}
The model of weighted automata is a well studied quantitative extension of classical automata.
In general a weighted automaton recognises a function $f : \Sigma^* \to \R$,
hence when considering a unary alphabet this becomes $f : \set{a}^* \to \R$, 
and identifying $\set{a}^*$ with $\N$ we can see $f$ as a sequence of numbers.
Whenever we write about sequences recognised by models like weighted automata, we implicitly assume that these are over a unary alphabet.

\subparagraph*{Cost-register automata}
Several characterisations of weighted automata have been introduced~\cite{DrosteG07,KreutzerR13,AlurDDRY13}.
We will be interested in the model of \emph{cost-register automata} (CRA). 
These are deterministic models with registers whose contents are blindly updated (i.e., without transitions like zero tests). 
It was shown that considering linear updates yields a model equivalent to weighted automata.

\vskip1em
We summarise in one theorem the equivalences above, which is the starting point of our work.
Technical definitions are given in the paper.
\begin{theorem}[Folklore, see for instance~\cite{Bousquet-Melou05,Schutzenberger61b,DrosteHWA09}]\label{th:classical}
The following classes of sequences are effectively equivalent.
\begin{itemize}
	\item Linear recurrence sequences,
	\item Sequences recognised by weighted automata,
	\item Sequences recognised by linear cost-register automata,
	\item Sequences denoted by rational expressions,
	\item Sequences whose formal series are rational, \textit{i.e.} of the form $\frac{P}{Q}$ where $P,Q$ are polynomials.
\end{itemize}
\end{theorem}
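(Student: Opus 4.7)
The plan is to establish a cycle of effective equivalences, because each pairwise transformation is constructive and most are classical. I would organize the proof around five implications, moving \textbf{LRS} $\to$ rational series $\to$ rational expressions $\to$ weighted automata $\to$ linear cost-register automata $\to$ \textbf{LRS}, rather than proving all ten pairwise equivalences separately.

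First, for \textbf{LRS} $\leftrightarrow$ rational formal series, I would take a sequence $\uu$ satisfying \eqref{eq:lrs} and consider its generating series $f(x) = \sum_n u_n x^n$. A direct manipulation shows that $Q(x) f(x) = P(x)$, where $Q(x) = 1 - a_1 x - \cdots - a_k x^k$ is the \emph{reciprocal} of the characteristic polynomial of the recurrence and $P$ is a polynomial of degree less than $k$ depending on the initial conditions $u_0,\ldots,u_{k-1}$. Conversely, expanding $P/Q$ as a power series and matching coefficients of $x^{n+k}$ yields back the recurrence \eqref{eq:lrs}. The equivalence between rational series and rational expressions over a unary alphabet is then Kleene--Sch\"utzenberger: rational expressions are built from constants $\langle a,0,0,\ldots\rangle$ (whose series is the constant $a$), Cauchy product (which corresponds to multiplication of series), sum (addition of series), and Kleene star (which maps $f$ with $f(0)=0$ to $1/(1-f)$), so they generate precisely the field of fractions of $\Q[x]$ consisting of rational series.

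Next, for the equivalence with weighted automata, I would use the standard linearisation. Given an \textbf{LRS} of order $k$, I would build a companion-matrix weighted automaton with $k$ states where the transition matrix $M$ encodes the recurrence, the initial vector stores $u_0,\ldots,u_{k-1}$, and the final vector extracts the first coordinate, so that $u_n = \alpha^T M^n \beta$. Conversely, given any weighted automaton with transition matrix $M$, the Cayley--Hamilton theorem tells us that $M$ annihilates its characteristic polynomial, and so every sequence of the form $\alpha^T M^n \beta$ satisfies a linear recurrence whose coefficients are read off from this polynomial. This gives the \textbf{LRS} $\leftrightarrow$ \textbf{WA} equivalence effectively.

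Finally, for the equivalence between weighted automata and linear cost-register automata, I would rely on the known simulation in both directions. From a linear \textbf{CRA} one obtains a weighted automaton by viewing each register as a coordinate and each update rule as specifying a row of a linear transition matrix, with the output function providing the final linear combination; the semantics is then exactly $\alpha^T M^n \beta$. The other direction is similar but more delicate, since one must turn a non-deterministic device into a deterministic register machine: the key idea is to maintain, in one register per state, the total weight of all runs ending in that state, with updates given by the transition matrix. I expect this last direction to be the main obstacle, because one must verify that these linear updates faithfully capture the sum-over-runs semantics and that the resulting register assignment computes the intended value; once this is checked, closing the cycle back to \textbf{LRS} via the weighted-automata-to-recurrence argument already given completes the proof.
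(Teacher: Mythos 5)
The paper does not actually prove Theorem~\ref{th:classical}: it is stated as folklore and delegated to the cited references, so there is no in-paper argument to compare yours against. On its own merits, your proposal is sound and follows exactly the standard classical route those references take: the generating-series computation $Q(x)f(x)=P(x)$ with $Q$ the reciprocal of the characteristic polynomial, the companion-matrix construction and Cayley--Hamilton for the weighted-automaton equivalence, and the one-register-per-state simulation for linear cost-register automata (which, over a unary alphabet, is even easier than you anticipate, since the input structure is trivial and the update is just $x_q:=\sum_p M(p,q)\,x_p$ with output $\sum_q F(q)x_q$). Two small points deserve tightening. First, rational series in one variable are not the full field of fractions of $\Q[x]$ but only those $P/Q$ with $Q(0)\neq 0$ (otherwise the power-series expansion does not exist); this side condition is implicit in the theorem statement and should be carried through. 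Second, you assert but do not show the direction from rational series to rational expressions; the missing one-line argument is to normalise $Q(0)=1$, write $Q=1-R$ with $R(0)=0$, and observe that $P/Q$ is denoted by $P\cdot R^{*}$, where the star is legitimate precisely because $R(0)=0$. With these repairs the cycle of implications you describe does establish all five equivalences effectively.
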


\paragraph*{Algorithmic analysis of linear recurrence sequences}
The questions regarding algorithmic analysis are far from being answered. 
A very simple and natural problem, the Skolem problem, is still unsolved~\cite{tao2008structure,OuaknineW15}:
given a linear recurrence sequence, does it contain a zero?
Recent breakthrough results sharpened our understanding of the Skolem problem~\cite{OuaknineW14,OuaknineW14a},
but one of the outcomes is that the general problem for the whole class of linear recurrence sequences is beyond our reach at the moment,
since it would impact notoriously difficult problems from number theory.
We refer the reader to the recent survey about what is known to be decidable for linear recurrence sequences~\cite{OuaknineW15}. 

\paragraph*{Our contributions}
Since the full class of linear recurrence sequences is too hard to be algorithmically analysed (we only mentioned the Skolem problem but many related problems are also difficult), let us revise our ambitions, go back to the drawing board, and study tractable subclasses.

In this paper we introduce \emph{poly-rational sequences} which is a strict fragment of linear recurrence sequences. 
We give several equivalent characterisations of this class following the equivalence results stated in Theorem~\ref{th:classical}.
Our results are summarised in the following theorem.
\begin{theorem}\label{th:main}
The following classes of sequences are effectively equivalent.
\begin{itemize}
	\item Sequences denoted by poly-rational expressions (Section~\ref{sec:rat}),
	\item Sequences recognised by polynomially ambiguous weighted automata (Section~\ref{sec:poly}),
	\item Sequences recognised by copyless cost-register automata (Section~\ref{sec:ccra}),
	\item Sequences whose formal series are of the form $\frac{P}{Q}$ where $P,Q$ are polynomials 
	and the roots of $Q$ are roots of rational numbers (Section~\ref{sec:lrs}),
	\item Linear recurrence sequences whose eigenvalues are roots of rational numbers (Section~\ref{sec:lrs}).
\end{itemize}
\end{theorem}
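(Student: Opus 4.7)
The plan is to establish Theorem~\ref{th:main} as a cycle of effective inclusions, with each section of the paper supplying one edge of the cycle. A natural ordering is
\[
(1) \;\Rightarrow\; (2) \;\Rightarrow\; (3) \;\Rightarrow\; (5) \;\Leftrightarrow\; (4) \;\Rightarrow\; (1),
\]
where (1)--(5) refer to the five bullets in the statement. The reason for routing through (5) late is that the eigenvalue condition is the most \emph{semantic} characterisation, while the first three are \emph{syntactic}, and the shortest path between them should pass through the formal-series/LRS side.

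For the two ``free'' equivalences I would reuse the classical theory: (4) and (5) are interchangeable because for any LRS the denominator of the rational formal series is (the reciprocal polynomial of) the characteristic polynomial of the minimal recurrence, so the roots of $Q$ are exactly the reciprocals of the eigenvalues, and being a root of a rational number is a property stable under taking reciprocals. The implication (1)$\Rightarrow$(2) I would prove by structural induction on poly-rational expressions: base sequences $\langle a,0,0,\ldots\rangle$ are recognised by a one-state weighted automaton, sums correspond to disjoint union (preserving finite ambiguity) and Cauchy products to serial composition, and the restricted star available in poly-rational expressions is designed precisely to increase ambiguity by at most a polynomial factor. The implication (2)$\Rightarrow$(3) converts a polynomially ambiguous weighted automaton into a copyless CRA, exploiting the fact that polynomial ambiguity corresponds to an acyclic decomposition of strongly connected components, so one can use one register per SCC and propagate values without duplicating them.

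The delicate step is (5)$\Rightarrow$(1): given an LRS whose eigenvalues are all of the form $\lambda$ with $\lambda^d\in\Q$, one must synthesise a concrete poly-rational expression denoting it. Here I would decompose the sequence, using partial-fraction style arguments on the formal series, into a finite sum of building blocks of the shape $n^j \lambda^n$ with $\lambda^d\in\Q$, and then show that each such block is itself expressible: $\lambda^n$ is obtainable by shuffling rational geometric sequences on $d$ residue classes modulo $d$ (this is exactly where the restriction to roots of rationals is used, since only such $\lambda$ can be produced without leaving $\Q$), and the polynomial factor $n^j$ is obtained by iterated Cauchy product, since $\frac{1}{(1-x)^{j+1}}$ is poly-rational. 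Closing (3)$\Rightarrow$(5) is comparatively routine: one analyses the transitions of a copyless CRA and shows that the only eigenvalues that can appear in the associated linear combinations are products of the multiplicative constants on its cycles, hence roots of rationals.

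The main obstacle I expect is the synthesis step (5)$\Rightarrow$(1), because it asks to invert the semantic-to-syntactic direction and therefore requires a canonical decomposition of an arbitrary LRS with roots-of-rationals eigenvalues into the rigid skeleton permitted by poly-rational expressions. A secondary difficulty is being precise about (2)$\Rightarrow$(3), since one must argue that the SCC-based register scheme is genuinely copyless; this relies on the structural characterisation of polynomial ambiguity and is what justifies placing it on the deterministic side of the equivalence.
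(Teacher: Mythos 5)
Your global architecture (a single cycle of inclusions rather than the paper's hub-and-spoke around $\PRat$) would be acceptable, and your treatment of (4)$\Leftrightarrow$(5) matches the paper's Lemma~\ref{lem:classical_LRS}. But the proposal rests on a misreading of what a poly-rational expression is, and this breaks two of your edges. By definition $\PRat = \Rat[\Arith \cup \Geo, +, \times, \shift, \shuffle]$: the atoms are arithmetic and geometric sequences (not $\langle a,0,0,\ldots\rangle$), the product is the \emph{Hadamard} product, and there is no Kleene star of any kind --- the paper even remarks that replacing the atoms by $\Fin$ would collapse the class to $\Fin$. So in (1)$\Rightarrow$(2) the induction you describe (serial composition for Cauchy product, a ``restricted star'') is over the wrong grammar; the actual work is to show that disjoint union, automaton product (for Hadamard product), shift and shuffle preserve polynomial ambiguity. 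More seriously, your (5)$\Rightarrow$(1) step manufactures the polynomial factor $n^j$ by iterated Cauchy product of $\frac{1}{1-x}$; Cauchy product is not an operation of $\PRat$, and closure of $\PRat$ under Cauchy product is exactly the point the paper flags as unclear a priori (it only follows a posteriori from the full equivalence). The paper instead obtains $n^j\lambda^n$ by Hadamard-multiplying arithmetic sequences with a geometric sequence and shuffling with null sequences (Lemma~\ref{lem:rationa-roots}), after a partial-fraction decomposition that deliberately stops at mutually prime factors $(1-\lambda x^\ell)^k$ so as never to leave $\Q$ (Lemma~\ref{lem:euclidian}); your decomposition into blocks $n^j\lambda^n$ with irrational $\lambda$ would produce irrational coefficients that you would then have to regroup.

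Two further edges are asserted rather than proved. For (2)$\Rightarrow$(3) you claim one copyless register per SCC suffices; in a polynomially ambiguous unary automaton the SCCs are indeed single cycles, but a value entering a cycle from several paths must be accumulated and then propagated without duplication, and it is not clear this can be done copylessly --- this is exactly the combinatorics the paper avoids by routing both classes through $\PRat$ (chained-loop decomposition for $\PolyWA$; normal form for copyless substitutions for $\CCRA$). Your (3)$\Rightarrow$(5) sentence about ``products of the multiplicative constants on its cycles'' gestures at the right phenomenon but hides the essential use of the copyless normal form: it is only because a register not occurring in its own update must eventually stabilise that the self-referential registers satisfy an affine recurrence $u(n+1)=a\,u(n)+b$, whence geometric-plus-constant behaviour and rational eigenvalues (Lemma~\ref{lem:normal_form}). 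As written, the proposal has genuine gaps at (1)$\Rightarrow$(2), (2)$\Rightarrow$(3) and (5)$\Rightarrow$(1).
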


We do not discuss the efficiency of reductions proving the equivalences. 
Our constructions are elementary, and in most cases they yield blow ups in the size of representation.

We note that the Skolem problem and its variants are known to be decidable, and NP-hard, for the subclass of poly-rational sequences.
The decidability easily follows from the fact that our class is subsumed by other classes for which such results were obtained (see e.g.~\cite{RebihaMM14b}, for the case where all eigenvalues are roots of algebraic real numbers).
The Skolem problem is known to be NP-hard already for the class of LRS whose eigenvalues are roots of unity~\cite{ABV17}. 
This implies that the Skolem problem for the class of poly-rational sequences is also NP-hard, 
which is the best known lower bound even for the full class of linear recurrence sequences.

\paragraph*{Related works}
The intractability of the Skolem problem for linear recurrence sequences also impacts the other equivalent models,
leading to the study of several restrictions.
A classical approach to tame weighted automata is to bound the ambiguity of weighted automata, \ie 
bounding the number of accepting runs with a function depending on the length of the word. 
Many positive results have been obtained in the past years following this approach~\cite{KlimannLMP04,KirstenL09,FijalkowRW17}.

Another restriction studied in the model of cost-register automata is the \emph{copyless} restriction: 
registers are not allowed to be copied more than once.
It was conjectured that the copyless restriction would result in good decidability properties~\cite{AlurDDRY13}, 
but this has been recently falsified~\cite{AlmagorCMP18}.

\section{Linear recurrence sequences and rational expressions}
\label{sec:rat}
We let $\uu = \sequence{u} = \langle u_0,u_1,u_2\ldots \rangle$ denote a sequence of rational numbers.

\paragraph*{Linear recurrence sequences}
We will assume that an LRS $\uu$ is given by the numbers $a_1, \ldots, a_k$ and the values of the first $k$ elements: 
$u_0,\ldots, u_{k-1}$. The recurrence~\eqref{eq:lrs} induces the sequence $\uu$.
We let $\LRS$ denote the class of LRS.
Given an LRS we define its characteristic polynomial as
\begin{align*}
Q(x) \; = \; x^k - a_{1}x^{k-1} - \ldots - a_{k-1}x - a_k.
\end{align*}
The roots of the characteristic polynomial are called the \emph{eigenvalues} of the LRS.

\paragraph*{Formal series}
Formal series are a different representation for sequences.
The sequence $\sequence{u}$ induces the formal series $S(x) = \sum_{n \in \N} u_n x^n$, 
with the interpretation that the coefficient of $x^n$ is the value of the $n$-th element in the sequence.
Note that a polynomial represents a sequence with a finite support. 

\begin{example}\label{ex:fib}
A standard example of an LRS is the Fibonacci sequence $\sequence{F}$ defined by the recurrence $F_{n+2} = F_{n+1} + F_n$ and initial values $F_0 = 0, F_1 = 1$. 
Its characteristic polynomial is $p(x) = x^2 - x - 1$, whose roots are $\frac{1+\sqrt{5}}{2}$ and $\frac{1 - \sqrt{5}}{2}$. 
The corresponding formal series is $S(x) = \sum_{n=0}^\infty F_nx^n$. 
Using the definition of $F$ we obtain $S(x) = x + xS(x) + x^2S(x)$ and thus $S(x) = \frac{x}{1 - x - x^2}$.
\end{example}

\paragraph*{Rational expressions}
We start by defining three classes of sequences.
\begin{itemize}
	\item $\Fin$: a sequence $\uu$ is in $\Fin$, or equivalently $\uu$ has finite support, if the set $\set{n \in \N : u_n \neq 0}$ is finite;
	\item $\Arith$: a sequence $\uu$ is in $\Arith$, or equivalently $\uu$ is arithmetic, if $u_0 = a$, $u_{n+1} = u_n + b$ for some rational numbers $a,b$;
	\item $\Geo$: a sequence $\uu$ is in $\Geo$, or equivalent $\uu$ is geometric, if $u_0 = a$, $u_{n+1} = \lambda \cdot u_n$, for some rational numbers $a,\lambda$.
\end{itemize}
We let $\Geo_\lambda$ denote the class of geometric sequences with a fixed parameter $\lambda$.

\vskip1em
We now define some classical operators. Here $\uu, \vv, \uu^1, \ldots, \uu^k$ are sequences.
\begin{itemize}
	\item \textbf{Sum}: $\uu + \vv$ is the component wise sum of sequences;
	\item \textbf{Cauchy product}: $\uu \cdot \vv = \langle \sum_{p+q=n} u_p \cdot v_q \rangle_{n \in \N}$;
	inducing $(\uu)^n$ defined by $(\uu)^0 = \langle 1,0,0,0,\ldots \rangle$ and $(\uu)^{n+1} = (\uu)^n \cdot \uu$, in particular $(\uu)^1 = \uu$;
	\item \textbf{Kleene star}: $\left(\uu \right)^* = \sum_{n \in \N} \left( \uu \right)^n$, it is only defined when $u_0 = 0$;
	\item \textbf{Hadamard product}: $\uu \times \vv$ is the component wise product of sequences;
	\item \textbf{Shift}: $\langle a,\uu \rangle = \langle a,u_0,u_1,\ldots \rangle$, defined for any rational number $a$;
	\item \textbf{Shuffle}: $\shuffleop{\uu^1,\uu^2,\ldots,\uu^k} = \langle u^1_0,u^2_0,\ldots,u^k_0,u^1_1,u^2_1,\ldots,u^k_1,u^1_2,\ldots \rangle$.
\end{itemize}

We write $\Rat[\C,\text{op}_1,\ldots,\text{op}_k]$ for the smallest class of sequences containing $\C$ and closed under the operators
$\text{op}_1,\ldots,\text{op}_k$.
Rational expressions in Theorem~\ref{th:classical} are classically defined as follows~\cite{Schutzenberger61b}:
\[
\Rat = \Rat[\Fin,+,\cdot,*].
\]
The class $\Rat$ contains all classes defined above, and is closed under all mentioned operators, \ie 
\[
\Rat = \Rat[\Fin \cup \Arith \cup \Geo,+,\cdot,*,\times,\shift,\shuffle].
\]

We now introduce a class of sequences denoted by a fragment of rational expressions,
whose study is the purpose of this article.
The class is called poly-rational sequences, because they are denoted by rational expressions using sum and product.

\begin{definition}[Poly-rational sequences]
\[
\PRat = \Rat[\Arith \cup \Geo, +, \times, \shift, \shuffle].
\]
\end{definition}
In other words $\PRat$ is the smallest class of sequences containing arithmetic and geometric sequences that is
closed under sum, Hadarmard product, shift, and shuffle. A trivial observation is that $\Fin \subseteq \PRat$ since using $\shift$ one can generate any sequence with finite support. One could try to simplify the definition of $\PRat$ replacing $\Arith \cup \Geo$ with $\Fin$. Unfortunately, the operators $+, \times, \shift, \shuffle$ are too restricted, and geometric and arithmetic sequences could not be generated. In fact, the class would collapse to $\Fin$. 

Since $\Rat$ contains $\Arith$ and $\Geo$ and is closed under Hadamard product, shift, and shuffle,
we have $\PRat \subseteq \Rat$. We will show that the inclusion is indeed strict.
As we will see in this paper, the class $\PRat$ has many equivalent and surprising characterisations.

\section{Characterisation with polynomially ambiguous weighted automata}
\label{sec:poly}
We refer to e.g.~\cite{DrosteHWA09} for an excellent introduction to weighted automata.
We consider weighted automata over the rational semiring $(\Q,+,\cdot)$, where $+$ and $\cdot$ are the standard sum and product.
For an alphabet $\Sigma$, weighted automata recognise functions assigning rational numbers to finite words,
\ie $f : \Sigma^* \to \Q$. 
In this paper we will consider only one-letter alphabets so the set of words is $\set{a}^* = \set{\varepsilon, a, a^2, \ldots}$,
which is identified with $\N$.
Therefore, weighted automata recognise functions $f : \N \to \Q$, \ie
weighted automata recognise sequences of rational numbers.

Formally, a weighted automaton is a tuple $\A = (Q, M, I, F)$, where $Q$ is a finite set of states, 
$M$ is a $Q \times Q$ matrix  over $\Q$ and $I, F$ are the initial and final vectors, respectively, of dimension $Q$ (for convenience we label the coordinates by elements of $Q$). 
The sequence recognised by the automaton $\A$ is $\sem{\A}$ defined by $\sem{\A}(n) = I^t M^n F$, 
where $I^t$ is the transpose of $I$. 

We give an equivalent definition of $\A$ in terms of accepting runs. 
We say that a state $q \in Q$ is an initial state if $I(q) \neq 0$ and that it is a final state if $F(q) \neq 0$. If $q$ is initial we say that its initial weight is $I(q)$, and if $q$ is final then its final weight is $F(q)$.
For two states $p,q \in Q$ we say that there is a transition from $p$ to $q$ if $M(p,q) \neq 0$. Such a transition is denoted $p \to q$ and its weights is $M(p,q)$.
A run $\rho$ is a sequence of consecutive transitions, and it is accepting if the first state is initial and the last state is final.
The value of an accepting run $\rho = q_0 \to q_1 \to \cdots \to q_n$ is 
\[
|\rho| = I(q_0) \cdot \left( \prod_{i = 0}^{n-1} M(q_i,q_{i+1}) \right) \cdot F(q_n).
\]
Let $\Runs$ denote the set of all accepting runs of length $n$. 
An alternative and equivalent definition of $\sem{\A}$ is 
\[
\sem{\A}(n) = \sum_{\rho \in \Runs} |\rho|.
\]

\begin{example}\label{ex:weighted}
Consider the automaton $\A = (Q, M, I, F)$ represented in Figure~\ref{fig:fib}.
We have $\sem{\A}(n) = F_n$, where $\sequence{F}$ is the Fibonacci sequence from Example~\ref{ex:fib}.
\begin{figure}[!ht]
\centering
\includegraphics[scale=.5]{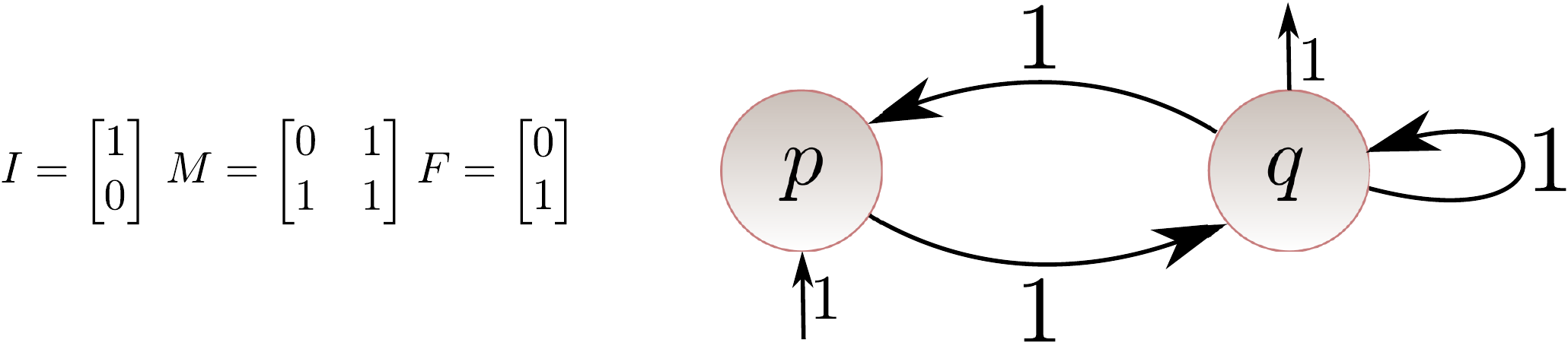}
\caption{A weighted automaton recognising the Fibonacci sequence.}
\label{fig:fib}
\end{figure}
\end{example}

The ambiguity of an automaton $\A$ is the function $a_\A : \N \to \N$ which associates to $n$ the number of accepting runs $|\Runs|$.
We consider the following classes:
\begin{itemize}
 \item $\DetWA$ -- the class of deterministic weighted automata;
 \item $\kWA$ for fixed $k \in \N$ -- the class of $k$-ambiguous weighted automata, \ie when $a_\A(n) \le k$ for all $n$;
 \item $\FinWA = \bigcup_{k \in \N}\kWA$ -- the class of finitely ambiguous weighted automata, \ie when there exists $k$ such that $a_\A(n) \le k$ for all $n$;
 \item $\PolyWA$ -- class of polynomially ambiguous automata, \ie when there exists a polynomial $P : \N \to \N$ such that $a_\A(n) \le P(n)$ for all $n$;
 \item $\WA$ -- the full class of weighted automata.
\end{itemize}

For example, the automaton in Example~\ref{ex:weighted} is not polynomially ambiguous because the number of accepting runs is exponential.
We will see that this is no accident by proving in Section~\ref{sec:lrs} that the Fibonacci sequence is not in $\PolyWA$.

We present our first characterisation of $\PRat$.

\begin{theorem}\label{th:poly}
$\PRat = \PolyWA$
\end{theorem}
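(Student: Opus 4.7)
The proof proceeds in two directions.

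\emph{Direction $\PRat \subseteq \PolyWA$.} I would first check that the generators lie in $\PolyWA$ and then that $\PolyWA$ is closed under each of the four operators. A geometric sequence is recognised by a single-state automaton of ambiguity one. An arithmetic sequence $\uu$ with $u_n = a + bn$ is recognised by a two-state automaton in which each accepting run of length $n$ corresponds to choosing a position at which to switch from the first to the second state, giving linear ambiguity. For closure: sum is handled by disjoint union (ambiguities add), Hadamard product by the synchronised product (ambiguities multiply), shift by prefixing with a fresh initial state routing to the old initial states, and shuffle by running $k$ copies with a modulo-$k$ routing gadget. Each operation preserves polynomial ambiguity.

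\emph{Direction $\PolyWA \subseteq \PRat$.} The starting point is the standard structural characterisation of polynomial ambiguity in the unary setting: a unary weighted automaton is polynomially ambiguous if and only if each strongly connected component of its underlying graph is a simple directed cycle (otherwise two distinct cycles through a common state would create exponentially many runs). So $\A$ is a DAG of simple cycles. I would decompose $\sem{\A}(n)$ as a finite sum over \emph{path types} in this DAG, where a path type fixes the sequence $C_1, \ldots, C_d$ of visited cycles, the entry and exit states of each, and the transitions linking them. For a fixed path type, the contribution to $\sem{\A}(n)$ is
\[
\sum_{m_1 + \cdots + m_d \,=\, n - (d-1)} \prod_{i=1}^d f_i(m_i),
\]
where $f_i(m)$ is the weight of staying $m$ steps inside $C_i$ between its chosen entry and exit. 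Each $f_i$ is a shuffle of a geometric sequence (whose ratio is the product of weights around the cycle) with zero sequences, and hence already lies in $\PRat$.

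It remains to show that such convolutions lie in $\PRat$. I would argue by induction on $d$. The base case $d = 2$, after using shuffle to separate the residue constraints, reduces to the identities $\sum_{i+j=n} \lambda_1^i \lambda_2^j = (\lambda_1^{n+1} - \lambda_2^{n+1})/(\lambda_1 - \lambda_2)$ when $\lambda_1 \neq \lambda_2$, and $\sum_{i+j=n} \lambda^i \lambda^j = (n+1)\lambda^n$ when $\lambda_1 = \lambda_2 = \lambda$. The first expression lies in $\PRat$ as a sum of shifted geometric sequences; the second as a Hadamard product of an arithmetic and a geometric sequence. The inductive step groups the first $d-1$ cycles into a single $\PRat$ sequence and reapplies the base case.

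\emph{Main obstacle.} The structural decomposition into a DAG of simple cycles is standard, and the closure properties in the forward direction are routine. The delicate point is expressing the convolution $\sum_{m_1 + \cdots + m_d = n - (d-1)} \prod_i f_i(m_i)$ as a poly-rational expression while respecting the cycle-length residue constraints simultaneously. Since $\PRat$ is not a priori closed under Cauchy product, the argument cannot invoke convolution closure directly; it must reach inside the shuffle structure, interleave Hadamard products and shuffles to separate residue classes, and rely on the explicit closed forms for the sums displayed above.
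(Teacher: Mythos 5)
Your first direction and your structural decomposition for the second direction match the paper's: the generators and closure constructions for $\PolyWA$ are the same, and your ``DAG of simple cycles'' is exactly the paper's decomposition into unions of chained loops (each state lies in at most one loop, else exponential ambiguity). The gap is in the last step. Your induction on $d$ does not close: after grouping the first $d-1$ cycles into a single sequence, that sequence is no longer geometric --- already for $d=3$ with equal ratios the grouped factor contains terms of the form $(n+1)\lambda^n$ --- so you cannot ``reapply the base case'', whose two identities only evaluate the convolution of two \emph{geometric} sequences. What you actually need is a strengthened inductive hypothesis, namely that the convolution of $d$ geometric sequences (with residue constraints) is a linear combination of sequences $n^p\lambda^n$ supported on arithmetic progressions, together with closed forms for sums such as $\sum_{i+j=n} i^p\lambda^i\mu^j$; none of this is supplied. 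The residue constraints are a second unresolved point: the convolution of a sequence supported on multiples of $\ell_1$ with one supported on multiples of $\ell_2$ is not obtained by ``separating residue classes'' of the factors, and one has to pass to a common period, which your sketch does not address.

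The paper resolves exactly this difficulty by changing representation: it computes the formal series of a chained loop as a product of terms $\frac{\alpha}{1-\lambda x^{\ell}}$ and monomials, observes that the class of series $\frac{P}{Q}$ with the roots of $Q$ roots of rational numbers is closed under sum and Cauchy product, applies partial fractions over $\Q[x]$ (Lemma~\ref{lem:euclidian}) to rewrite everything as a sum of terms $\frac{R}{(1-\lambda x^{\ell})^k}$, and then expands $\frac{1}{(1-\lambda x^{\ell})^k}=\sum_{n}\binom{n+k-1}{k}\lambda^n x^{\ell n}$ to land in $\PRat$ via Hadamard products of arithmetic and geometric sequences and shuffles with null sequences (Lemma~\ref{lem:rationa-roots}). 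Your plan could in principle be repaired by proving the strengthened convolution statement directly, but as written the key analytic step is missing.
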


\subsection*{Proof of Theorem~\ref{th:poly}}
This subsection is divided into two parts for both inclusions.

\subsection*{$\PRat \subseteq \PolyWA$}

Figure~\ref{fig:arithgeo} shows how to recognise the arithmetic and the geometric sequences.
\begin{figure}[!ht]
\centering
\includegraphics[scale=.5]{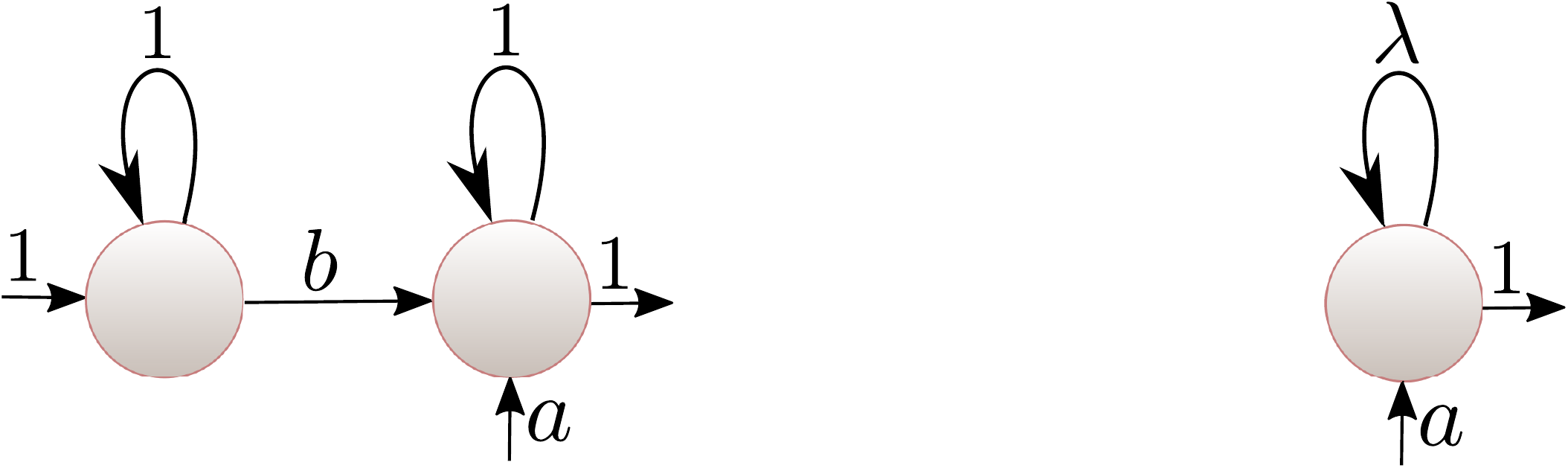}
\caption{The weighted automaton on the left recognises the arithmetic sequence with parameters $(a,b)$ and it is linearly ambiguous.
The weighted automaton on the right recognises the geometric sequence with parameters $a,\lambda$ and it is deterministic.}\label{fig:arithgeo}
\label{fig:arithmetic-geometric}
\end{figure}
For each finitely supported sequence a simple weighted automaton can be constructed.
It remains to prove that the class $\PolyWA$ is closed under the operators.
The sum and products correspond to union and product of automata,
it is readily verified that these standard constructions preserve the polynomial ambiguity. Below we deal with shift and shuffle operators.

Suppose we have a polynomially ambiguous automaton $\A$ for $\uu$ and we want to construct a new polynomially ambiguous automaton $\A'$ for $\langle a,\uu \rangle$. We start with the case when $a = 0$. Then $\A'$ has the same state as $\A$ plus one new state $q_0$, which is the only initial state in $\A'$. All transitions from $\A$ are inherited. There are additionally only outgoing transitions from $q_0$ to all states that are initial in $\A$; the weight of the transition is the initial weight of the corresponding state in $\A$. It is readily verified that $\A'$ recognises $\langle 0,\uu \rangle$ and that $\A'$ is polynomially ambiguous.
For $a \neq 0$ it suffices to add one more state that is both initial and final with initial weight $1$ and final weight $a$.

To deal with shuffle we start with the following preliminary construction. Fix some $k > 0$ and a polynomially ambiguous automaton $\A$ recognising $\uu$. We construct $\A[k]$ recognising $\uu' = \langle \underbrace{u_0, 0,\ldots,0}_{k}, \underbrace{u_1, 0,\ldots,0}_{k}, u_2, \ldots \rangle$, \ie elements $u_i$ are separated by $k-1$ elements with $0$. The idea to construct $\A'$ is that the set of states have an additional component $\set{0,\ldots, k-1}$, and they behave like $\A$ every $k$-th step; in the remaining steps they only wait.
Formally, the set of states of $\A[k]$ is $Q \times \set{0,\ldots,k-1}$, where $Q$ is the set of states of $\A$. The initial (final) states are $(q,0)$ such that $q$ is initial (final) in $\A$ with the same weight. For every transition $p \to q$ in $\A$ there is a transition $(p,0) \to (q,1)$ in $\A[k]$ with the same weight. The remaining transitions are $(q,i) \to (q,(i+1) \mod k)$ with weight $1$, defined for every $i > 0$ and every $q\in Q$. It is readily verified that $\A[k]$ recognises $\uu'$.

Let $\A_0,\ldots,\A_{k-1}$ be polynomially ambiguous automata recognising $\uu_0,\ldots,\uu_{k-1}$.
For every $\A_i$ let $\A_i[k]$ be an automaton as above, additionally shifted $i$ times with $0$'s. Then $\shuffleop{\uu_0,\ldots,\uu_{k-1}}$ is recognised by the disjoint union of $\A_i[k]$.

\subsection*{$\PolyWA \subseteq \PRat$}

The first step is to decompose polynomially ambiguous automata into a union of automata that we will call \emph{chained loops}.
We say that the states $p_0, p_1, \ldots p_{k-1} \in Q$ form a \emph{loop} if $M(p_i, p_j) \neq 0$ is equivalent to $j = i+1 \mod k$
and a \emph{path} if $M(p_i, p_j) \neq 0$ is equivalent to $j = i+1$ (in particular $p_{k-1}$ has no successor).
A chained loop of size $k$ is an automaton over the set states of $\set{q_0, \ldots, q_{k-1}} \cup P$ such that
\begin{itemize}
	\item $q_0$ is the unique initial state;
	\item $q_0, \ldots, q_{k-1}$ form a path;
	\item each $q_i$ is contained in at most one loop (the states in $P$ are used only as intermediate states in the loops);
	\item $q_{k-1}$ is the unique final state with $F(q_{k-1}) = 1$.
\end{itemize}
We define the concatenation of two chained loops $\A_1, \A_2$: this is the chained loop
obtained by constructing the union of the two automata
with the initial state being the initial state of $\A_1$, the final state being the final state of $\A_2$, and
rewiring the output of $\A_1$ to the initial state of $\A_2$, see \eg Figure~\ref{fig:chained}.

\begin{figure}[!ht]
\centering
\includegraphics[scale=.4]{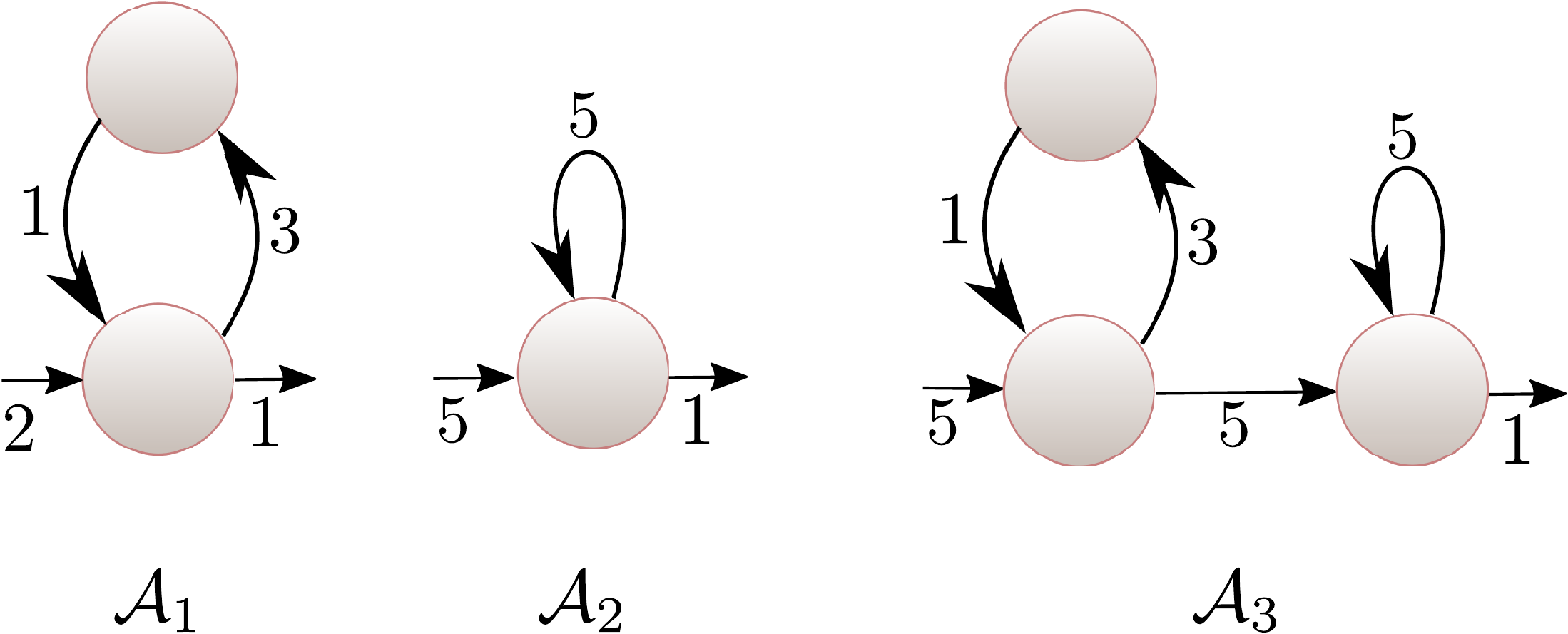}
\caption{Three example chained loops. The initial and final weights are depicted by ingoing and outgoing edges. The chained loop $\A_1$ recognises the sequence defined by $f_1(2n) = 2 \cdot 3^n$, $f_1(2n+1) = 0$ whose power series is $\frac{2}{1 - 3x^2}$. The chained loop $\A_2$ recognises the sequence $f_2(n) = 5^{n+1}$ whose power series is $\frac{5}{1-5x}$. The chained loop $\A_3$ is the concatenation of $\A_1$ and $\A_2$ and it recognises the sequence $f_3(n) = \sum_{i=1}^n f_1(i-1) \cdot f_2(n-i)$ whose power series is $\frac{10x}{(1-3x^2)(1-5x)}$.}\label{fig:chained}
\end{figure}

\begin{lemma}\label{lem:poly-chained}
Any polynomially ambiguous weighted automaton is equivalent to a union of chained loops.
\end{lemma}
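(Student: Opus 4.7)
The plan is to exploit the structural characterisation of polynomial ambiguity in order to decompose any accepting run according to its \emph{profile}, that is, the sequence of SCCs it visits together with the choices of entry and exit states within each.

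First, assume without loss of generality that $\A = (Q, M, I, F)$ is trim: states not reachable from an initial state, or from which no final state is reachable, can be removed without affecting $\sem{\A}$. I would then establish a structural lemma asserting that in a polynomially ambiguous weighted automaton over a unary alphabet, every SCC of the support graph of $M$ (the digraph of non-zero-weight transitions) is either trivial (a single state with no self-loop) or a simple directed cycle. The argument is combinatorial: if some state $q$ belonged to two distinct simple cycles of lengths $l_1, l_2$, the number of closed walks at $q$ of length $n$ would already contain $\binom{k_1+k_2}{k_1}$ interleavings whenever $k_1 l_1 + k_2 l_2 = n$, hence grow exponentially in $n$, contradicting polynomial ambiguity.

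Second, consider the DAG of SCCs. Any accepting run $\rho$ traverses a linear chain of SCCs $C_1, \ldots, C_m$, entering each $C_i$ at a state $p_i^{\mathrm{in}}$, looping around its cycle a nonnegative number of times, and leaving at some state $p_i^{\mathrm{out}}$ via an inter-SCC transition (except for $C_m$, where it ends at a final state). The \emph{profile} of $\rho$ records the chain $C_1, \ldots, C_m$, the inter-SCC transitions used, the entry and exit states, and the overall initial and final states. There are only finitely many profiles, and each accepting run determines a unique one.

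Third, for each profile $\pi$ construct a chained loop $\A_\pi$. For each SCC $C_i$ in $\pi$, split its cycle at $p_i^{\mathrm{in}}$ into a direct path $p_i^{\mathrm{in}} \rightsquigarrow p_i^{\mathrm{out}}$ of length $d_i$ and the full cycle of length $l_i$. Put $d_i$ fresh backbone states implementing the direct path, with the same transition weights as in $\A$, and attach at the backbone copy of $p_i^{\mathrm{in}}$ a single loop consisting of $l_i - 1$ fresh $P$-states replicating the cycle of $C_i$. Concatenate these SCC gadgets via the inter-SCC transitions, set the initial weight of the first backbone state to $I(p_1^{\mathrm{in}})$, and fold $F(p_m^{\mathrm{out}})$ into the weight of the last backbone edge so that the terminal backbone state can carry final weight $1$, as required. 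An accepting run of $\A_\pi$ of length $n$ is determined by the number of times each backbone loop is taken, and this is in weight-preserving bijection with the set of accepting runs of $\A$ of length $n$ whose profile equals $\pi$. Finally, the disjoint union $\bigcup_\pi \A_\pi$ partitions the accepting runs of $\A$, so summing over profiles yields $\sem{\A}$.

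The main obstacle I expect is the structural lemma of the first step: establishing that polynomial ambiguity forces each SCC of the support graph to be a simple cycle is classical but requires a careful combinatorial argument. The remainder is routine duplication of SCC states into fresh backbone and $P$-copies, with careful bookkeeping of weights.
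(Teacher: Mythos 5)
Your proof is correct and follows essentially the same route as the paper's: both rest on the observation that polynomial ambiguity forces every state to lie on at most one cycle (your SCC formulation is equivalent, and your interleaving argument is the paper's ``a state contained in two loops induces exponential ambiguity'' spelled out), and both then partition the accepting runs into finitely many classes indexed by the skeleton of states/SCCs visited, each class realised by one chained loop. Your \emph{profiles} play exactly the role of the paper's sequences $(q_0,\dots,q_k)$ of distinct states, and your gadget construction is a more explicit rendering of the paper's run-decomposition bijection.
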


\begin{proof}
Let $\A$ be a polynomially ambiguous weighted automaton. 
Without loss of generality $\A$ is trimmed, \ie every state occurs in at least one accepting run.

We first note that any state in $\A$ is contained in at most one loop.
Indeed, a state contained in two loops induces a sequence of words with exponential ambiguity.
This implies that a sequence $(q_0,q_1,\dots,q_k)$ with $q_i \neq q_j$ for $i \neq j$ induces at most one chained loop of which it is the path.
There are finitely many such sequences because $k < |\A|$.

We claim that $\A$ is equivalent to the union of all chained loops induced by such sequences.
Indeed, there is a bijection between the runs of $\A$ and the runs of all the chained loops, respecting the values of runs.
Consider a run $\rho$ of $\A$, where a state $q$ appears multiple times. 
Then between each occurence of $q$ this is the same run, because they are loops over $q$ and there can be only one loop containing $q$.
So $\rho = u v^k w$, where $v$ is the (only) loop containing $q$.
Repeating this for $u$ and $w$, we obtain a unique decomposition of $\rho$ into 
\[
q_0 \cdot \ell_0^{m_0} \cdot q_0 \to q_1 \cdot \ell_1^{m_1} \cdot q_1 \to q_2 \dots q_k \cdot \ell_n^{m_k} \cdot q_k,
\]
where $\ell_i$ is a loop over $q_i$ (we can have $m_i = 0$) and $q_i \neq q_j$ for $i \neq j$.
\end{proof}

Our aim is to use the decomposition result stated in Lemma~\ref{lem:poly-chained} to prove the inclusion 
$\PolyWA \subseteq \PRat$.
It will be convenient for reasoning to use formal series.

\begin{lemma}\label{lem:chained}\hfill
\begin{itemize}
	\item The formal series induced by a chained loop of size $1$ is of the form $\frac{\alpha}{1 - \lambda x^{\ell}}$,
where $\alpha = I(q_0)$, $\lambda$ is the product of the weights in the loop and $\ell$ is the length of the loop. 
If there is no loop this reduces to $\alpha$.
	\item Let $S_1,S_2$ be the formal series induced by the chained loops $\A_1$ and $\A_2$,
	then the formal series induced by the concatenation of $\A_1$ and $\A_2$ is $x \cdot S_1 \cdot S_2$. 
	\item Let $S_1,S_2$ be the formal series induced by two automata $\A_1$ and $\A_2$,
	then the formal series induced by the union of $\A_1$ and $\A_2$ is $S_1 + S_2$.
\end{itemize}
\end{lemma}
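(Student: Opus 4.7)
The plan is to prove each of the three items by direct enumeration of accepting runs and elementary manipulation of formal power series. Each item is essentially a book-keeping statement about how runs combine, so no deep argument is needed.

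For the first item, I would classify the accepting runs of a chained loop $\A$ of size $1$. The unique state $q_0$ is simultaneously initial (with weight $\alpha = I(q_0)$) and final (with weight $1$), so any accepting run must begin and end at $q_0$. If no loop is attached at $q_0$, the only accepting run is the empty one, of length $0$ and value $\alpha \cdot 1 = \alpha$, whence the series is simply $\alpha$. If there is a loop of length $\ell$ whose transitions multiply to $\lambda$, then the accepting runs of length $n$ are in bijection with the number $k$ of loop traversals such that $n = k\ell$, each such run having value $\alpha \lambda^k$. Summing the resulting geometric series yields $\sum_{k \ge 0} \alpha \lambda^k x^{k\ell} = \alpha/(1 - \lambda x^\ell)$.

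For the second item, I would argue bijectively. Any accepting run of length $n$ in the concatenation factors uniquely as $\rho_1 \cdot e \cdot \rho_2$, where $\rho_1$ is an accepting run in $\A_1$ ending at its old final state, $e$ is the rewiring transition inserted by the concatenation, and $\rho_2$ is an accepting run in $\A_2$ starting from its old initial state. The length of such a combined run is $n_1 + 1 + n_2$, and its value is $|\rho_1| \cdot |\rho_2|$; this uses the definition of chained loop, according to which the old final state of $\A_1$ has final weight $1$, so nothing is lost when that weight is discarded and the rewiring transition absorbs the initial weight of $\A_2$. Summing over all decompositions gives the $n$-th coefficient of the concatenation as $\sum_{n_1+n_2=n-1} s_{n_1}^{(1)} s_{n_2}^{(2)}$, which is exactly the coefficient of $x^n$ in $x \cdot S_1 \cdot S_2$.

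For the third item, the accepting runs of the disjoint union $\A_1 \cup \A_2$ partition into the accepting runs of $\A_1$ and those of $\A_2$, because no state is shared. The definition $\sem{\A}(n) = \sum_{\rho \in \Runs} |\rho|$ then immediately gives $\sem{\A_1 \cup \A_2}(n) = \sem{\A_1}(n) + \sem{\A_2}(n)$, and summing over $n$ yields $S_1 + S_2$ as formal series.

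The only subtle point—and the one I would be most careful about—is the accounting in the second item: one must verify that inserting a single transition contributes exactly the factor $x$ and that the products of weights compose cleanly into $|\rho_1| \cdot |\rho_2|$, which is where the chained-loop normalisation $F(q_{k-1}) = 1$ plays its role. The other two items are straightforward enumerations.
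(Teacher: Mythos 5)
Your proof is correct and follows essentially the same approach as the paper: the paper dismisses items one and three as immediate and establishes item two via the same convolution $\sem{\A}(n) = \sum_{i=0}^n \sem{\A_1}(i-1)\cdot\sem{\A_2}(n-i)$, obtained from exactly the decomposition of an accepting run into a run of $\A_1$, the rewiring transition (accounting for the shift by one, i.e.\ the factor $x$), and a run of $\A_2$. Your write-up merely fills in the enumeration details for items one and three that the paper leaves implicit.
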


\begin{proof}
The first and the third item are immediate, we focus on the second. For convenience let us assume that $\A_2(-1) = 0$.
By definition the concatenation of two chained loops recognises the sequence defined by 
\[
\sem{\A}(n) = \sum_{i = 0}^n \sem{\A_1}(i-1) \cdot \sem{\A_2}(n-i) 
\]
since an accepting run in the concatenation is the concatenation of an accepting run in $\A_1$ and an accepting run in $\A_2$. The only issue is that the output state of $\A_1$ was changed into a transition, and to include this step we write $\A_1(i-1)$ instead of $\A_1(i)$.
Hence the formal series is indeed the Cauchy product of $S_1$ and $S_2$, shifted by one.
\end{proof}

We are now half-way through the proof of the inclusion $\PolyWA \subseteq \PRat$:
thanks to Lemma~\ref{lem:poly-chained}, we can restrict our attention to unions of chained loops,
and thanks to Lemma~\ref{lem:chained}, we know what are the formal series induced by the sequences computed by such automata.
More specifically, they are obtained from formal series of the form $\frac{\alpha}{1-\lambda x^\ell}$
by taking sums and Cauchy products (with an additional shift).

To prove that $\PRat$ contains such sequences it is tempting to attempt showing that the sequences above are in $\PRat$
and the closure of $\PRat$ under sums and Cauchy products.
Unfortunately, the closure under Cauchy product is not clear (although it will follow from the final result that it indeed holds).

We sidestep this issue by observing that we only need to be able to do Cauchy products of formal series of a special form.
Indeed, the formal series described above are of the form $\frac{P}{Q}$ where $P,Q$ are rational polynomials 
and the roots of $Q$ are roots of rational numbers: this is true of $\frac{\alpha}{1-\lambda x^\ell}$ and is clearly closed under sums and Cauchy products (with the additional shift).

Notice that every chained loop can be obtained as concatenations of chained loops of size~$1$. Thus Lemma~\ref{lem:chained} gives a characterisation of formal series corresponding to unions of chained loops: these are sums of products of $\frac{\alpha}{1 - \lambda x^{\ell}}$ and polynomials.
We further simplify this characterisation applying the following lemma.

\begin{lemma}\label{lem:euclidian}
Consider the formal series $\frac{P}{Q}$ where $P,Q$ are rational polynomials and the roots of $Q$ are roots of rational numbers.
Then $\frac{P}{Q}$ can be written as the sum of formal series of the form $\frac{R}{(1 - \lambda x^\ell)^k}$ 
for rational polynomials $R$, rational numbers $\lambda$, and $\ell,k$ natural numbers.
\end{lemma}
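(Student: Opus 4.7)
The plan is to apply the classical partial fraction decomposition over $\Q[x]$ and then rewrite each piece. First, after cancelling common factors we may assume $Q(0) \neq 0$; otherwise $P/Q$ would not be a formal power series. Factor $Q = c \prod_{i=1}^{m} Q_i^{k_i}$ in $\Q[x]$ with the $Q_i$ pairwise distinct monic irreducible, and invoke partial fractions over $\Q$ to write
\[
\frac{P}{Q} = R_0 + \sum_{i=1}^{m} \frac{R_i}{Q_i^{k_i}},
\]
for some polynomials $R_0, R_1, \ldots, R_m \in \Q[x]$.

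The crucial step is converting each denominator $Q_i^{k_i}$ into $(1 - \lambda_i x^{\ell_i})^{k_i}$. Fix a root $\alpha$ of $Q_i$; by hypothesis there exist a natural number $\ell_i \geq 1$ and a rational $r_i$ with $\alpha^{\ell_i} = r_i$, and $r_i \neq 0$ because $Q(0) \neq 0$. Setting $\lambda_i = 1/r_i$, the polynomial $1 - \lambda_i x^{\ell_i}$ vanishes at $\alpha$. Since $Q_i$ is (up to a rational scalar) the minimal polynomial of $\alpha$ over $\Q$, it divides $1 - \lambda_i x^{\ell_i}$ in $\Q[x]$, so $1 - \lambda_i x^{\ell_i} = Q_i T_i$ for some $T_i \in \Q[x]$. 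Multiplying numerator and denominator of $R_i/Q_i^{k_i}$ by $T_i^{k_i}$ yields
\[
\frac{R_i}{Q_i^{k_i}} = \frac{R_i T_i^{k_i}}{(1 - \lambda_i x^{\ell_i})^{k_i}},
\]
which has the required shape.

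To finish, the stray polynomial summand $R_0$ must be absorbed into the sum: merging it with any nonempty term via a common denominator works, and if the sum is empty then $R_0$ can be rewritten as $R_0(1-x)/(1-x)$, a term of the required form with $\lambda = \ell = k = 1$. The only genuine obstacle in the argument is the divisibility step $Q_i \mid 1 - \lambda_i x^{\ell_i}$ in $\Q[x]$, but this follows immediately from the fact that the minimal polynomial of $\alpha$ over $\Q$ divides every rational polynomial vanishing at $\alpha$; everything else is standard partial-fraction bookkeeping.
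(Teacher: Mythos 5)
Your proof is correct, and while it starts from the same basic tool as the paper (partial fractions in the Euclidean ring $\Q[x]$), the execution is genuinely different. The paper decomposes $\frac{P}{Q}$ by first claiming that $Q$ itself factors into \emph{mutually prime} polynomials each already of the form $(1-\lambda x^\ell)^k$, and then applying partial fractions to that factorisation, so it never needs to change a denominator. You instead run the standard decomposition over the irreducible factors $Q_i$ of $Q$ and only afterwards convert each $\frac{R_i}{Q_i^{k_i}}$ into the required shape by multiplying numerator and denominator by $T_i^{k_i}$, where $1-\lambda_i x^{\ell_i}=Q_iT_i$ comes from the fact that the minimal polynomial of a root $\alpha$ with $\alpha^{\ell_i}\in\Q$ divides $1-\lambda_i x^{\ell_i}$. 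This inflation step is the real content of your argument, and it buys robustness: the paper's factorisation claim is delicate for denominators such as $x^2-x+1$, whose roots are sixth roots of unity (hence roots of the rational number $1$) but which is neither a scalar multiple of any $1-\lambda x^\ell$ nor a product of mutually prime such powers; your argument handles it transparently by passing to $\frac{1+x}{1+x^3}$. Your remaining bookkeeping is also sound: reducing to $Q(0)\neq 0$, noting $r_i\neq 0$, absorbing the polynomial part $R_0$, and the implicit observation that all denominators have nonzero constant term, so the identities of rational functions are also identities of formal power series.
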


\begin{proof}
This is a direct consequence of the fact that $\Q[x]$ is a Euclidean ring.
The exact statement following from this is that 
any product $\prod_{i = 1}^n \frac{R_i}{P_i}$ where the polynomials $P_i$ are mutually prime 
(meaning, for each $i$, the polynomials $P_i$ and $\prod_{j \neq i} P_j$ are coprime)
can be written as a sum of $\frac{Q_i}{P_i}$ for some rational polynomials $Q_i$.

To conclude, we observe that any polynomial whose roots are roots of rational numbers
can be written as a product of mutually prime polynomials of the form $(1 - \lambda x^\ell)^k$.
\end{proof}

By Lemma~\ref{lem:chained} and Lemma~\ref{lem:euclidian} it follows that for every finite union of chained loops its formal series is a sum of $\frac{R}{(1 - \lambda x^\ell)^k}$ for rational polynomials $R$, rational numbers $\lambda$, and $\ell,k$ natural numbers.
Combining this with Lemma~\ref{lem:poly-chained}
we get that the formal series computed by $\PolyWA$ are of the same form.
Thus we have reduced proving the inclusion $\PolyWA \subseteq \PRat$ to
proving that sequences whose formal series are sums of formal series of the form $\frac{R}{(1 - \lambda x^\ell)^k}$ are in $\PRat$. 

Since $\PRat$ is closed under sum, it suffices to consider one such formal series.
Moreover, due to the closure under shifts we can assume that the polynomial $R$ is equal to $1$; as stated in the lemma below.

\begin{lemma}\label{lem:rationa-roots}
The sequence whose formal series is $\frac{1}{(1 - \lambda x^\ell)^k}$ is in $\PRat$.
\end{lemma}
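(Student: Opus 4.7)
The plan is to expand $1/(1-\lambda x^\ell)^k$ as a power series to identify the target sequence explicitly, and then to build it up using the operators available in $\PRat$. By the generalised binomial theorem,
\[
\frac{1}{(1 - \lambda x^\ell)^k} \; = \; \sum_{n \ge 0} \binom{n+k-1}{k-1} \lambda^n x^{n\ell},
\]
so the sought sequence $\uu$ satisfies $u_{n\ell} = \binom{n+k-1}{k-1} \lambda^n$ for all $n \in \N$, and $u_m = 0$ whenever $\ell$ does not divide $m$. The key observation is that $\binom{n+k-1}{k-1}$ is a rational polynomial in $n$ of degree $k-1$. The natural two-stage strategy is therefore to first build the dense sequence $\vv$ with $v_n = \binom{n+k-1}{k-1} \lambda^n$, and then spread it out to obtain $\uu$.

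For the dense part, the identity sequence $\langle n \rangle_{n \in \N}$ lies in $\Arith$; iterated Hadamard products then give $\langle n^j \rangle_{n \in \N} \in \PRat$ for each $j \ge 0$; combining these via Hadamard products with rational constant sequences (which are arithmetic with $b = 0$, hence in $\PRat$) and taking sums realises any sequence $\langle p(n) \rangle_{n \in \N}$ for a rational polynomial $p$. A final Hadamard product with the geometric sequence $\langle \lambda^n \rangle_{n \in \N} \in \Geo$ yields $\vv \in \PRat$.

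For the spreading step, the zero sequence is in $\Arith$ (take $a = b = 0$), and $\shuffleop{\vv, \mathbf{0}, \ldots, \mathbf{0}}$ with $\ell - 1$ copies of the zero sequence places $v_n$ at position $n\ell$ and zero elsewhere, which is exactly $\uu$. The case $\ell = 1$ needs no shuffle, and the degenerate case $\lambda = 0$ reduces to $\langle 1, 0, 0, \ldots \rangle \in \Fin \subseteq \PRat$.

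The crux of the argument is really just recognising the explicit shape of the coefficients of $1/(1-\lambda x^\ell)^k$ from the generalised binomial expansion; once that is done, the only mildly non-trivial point is realising arbitrary polynomial sequences in $\PRat$, since arithmetic sequences are only affine in $n$. This gap is bridged painlessly by iterating the Hadamard product on $\langle n \rangle_{n \in \N}$, after which the whole construction is a direct application of closure under sum, Hadamard product, and shuffle.
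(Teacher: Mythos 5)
Your proof is correct and follows essentially the same route as the paper: expand via the generalised binomial theorem, observe the coefficient is a polynomial in $n$, realise polynomial-times-geometric sequences via arithmetic sequences, Hadamard products and sums, then shuffle with $\ell-1$ zero sequences. (Incidentally, your binomial coefficient $\binom{n+k-1}{k-1}$ is the correct one; the paper writes $\binom{n+k-1}{k}$, a harmless slip since only polynomiality in $n$ is used.)
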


\begin{proof}
We know that
\[
\frac{1}{(1 - \lambda x^\ell)^k} = \sum_{n \in \N} \binom{n + k - 1}{k} \lambda^n x^{\ell \cdot n}.
\]
Note that $\binom{n + k - 1}{k}$ is a polynomial in $n$ of degree at most $k$, 
i.e. $\binom{n + k - 1}{k} = \sum_{p = 0}^k a_p n^p$.
It follows that
\[
\frac{1}{(1 - \lambda x^\ell)^k} = \sum_{p = 0}^k\ a_p \cdot \sum_{n \in \N} n^p \lambda^n x^{\ell \cdot n}
\]
It is enough to prove that for each $p$ the sequence whose formal series is 
\[
\sum_{n \in \N} a_p n^p \lambda^n x^{\ell \cdot n}
\]
is in $\PRat$. 
Using an arithmetic sequence and Hadamard products we construct $\langle a_p n^p \rangle_{n \in \N}$.
Multiplying it using Hadamard product with the geometric sequence $\langle \lambda^n \rangle_{n \in \N}$
yields $\langle a_p n^p \lambda^n \rangle_{n \in \N}$.
Shuffling the obtained sequence with $\ell - 1$ null sequences yields the desired sequence.
\end{proof}

\subsection{Application: the ambiguity hierarchy of weighted automata}
We show that the natural classes of weighted automata defined by ambiguity can be described using subclasses of rational expressions.

\begin{figure}[!ht]
\centering
\includegraphics[scale=.6]{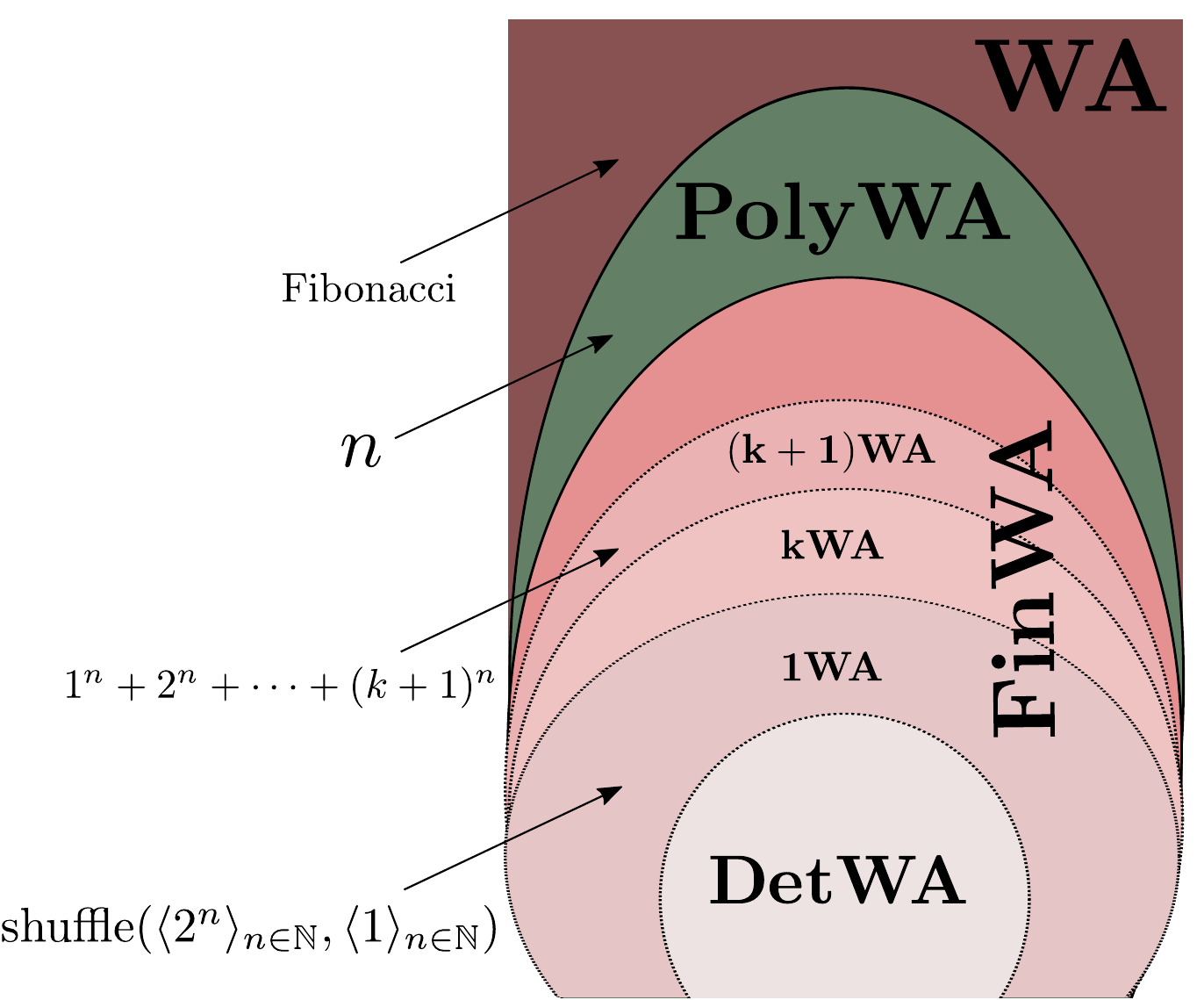}
\caption{The strict ambiguous hierarchy of weighted automata.}
\label{fig:hierarchy}
\end{figure}


\begin{lemma}\label{lem:wa-classes}\hfill
\begin{itemize}
	\item $\DetWA = \bigcup_{\lambda \in \Q} \Rat[\Geo_\lambda, \shift, \shuffle]$;
	\item $\FinWA = \Rat[\Geo, +, \shift, \shuffle]$.
\end{itemize}
\end{lemma}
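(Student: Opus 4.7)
The plan is to refine the chained-loop decomposition of Lemma~\ref{lem:poly-chained} by analysing how the ambiguity constrains the number of loops in each chain. This pinpoints the structural form of sequences in $\DetWA$ and $\FinWA$ and reduces both equalities to elementary inductions on the expression.

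For the first equality $\DetWA = \bigcup_{\lambda \in \Q} \Rat[\Geo_\lambda, \shift, \shuffle]$, I would first observe that a deterministic weighted automaton over a unary alphabet has a single lollipop reachable from the initial state: a transient path of length $m$ feeding into one cycle of length $\ell$ whose product of weights is some $\lambda$. The computed sequence therefore satisfies $u_{m+j\ell+i} = c_i \lambda^j$ past the transient, exhibiting it as a shift by the prefix $u_0,\dots,u_{m-1}$ of a shuffle of $\ell$ geometric sequences of common ratio $\lambda$. For the converse I would induct on the poly-rational expression: the base case $\Geo_\lambda$ is a single self-loop; $\shift$ is realised by prepending a transient state; and $\shuffle$ is realised by a round-robin product construction that tracks the current state of each input lollipop together with a phase index. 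The resulting automaton is a lollipop of cycle length $k \cdot \lcm(\ell_1,\dots,\ell_k)$ whose loop weight is a power of $\lambda$, which is why the statement is a union over all rational $\lambda$.

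For the second equality $\FinWA = \Rat[\Geo, +, \shift, \shuffle]$, I would combine Lemma~\ref{lem:poly-chained} with a counting argument. A chained loop with $j$ attached loops of lengths $\ell_1,\dots,\ell_j$ on its path has a number of accepting runs of length $n$ equal to the number of non-negative integer solutions of $\sum_i m_i \ell_i = n - k$, a quasi-polynomial in $n$ of degree $j-1$. Hence finite ambiguity forces every chained loop in the decomposition to carry at most one loop, reducing each summand to the $\DetWA$ case (with possibly different loop weights across summands). Taking the union of such chained loops corresponds to taking sums, yielding $\FinWA \subseteq \Rat[\Geo, +, \shift, \shuffle]$. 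The reverse inclusion is a straightforward closure argument: $\Geo \subseteq \FinWA$ is immediate; $+$ is the disjoint union of automata and adds ambiguities; $\shift$ adds a single non-looping state; and $\shuffle$ uses the same round-robin product, which multiplies ambiguities by at most $k$.

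The main obstacle is the $\DetWA$ shuffle-closure step, because the product lollipop's loop weight is $\lambda^M$ for an integer $M$ depending on the least common multiple of the input loop lengths. Consequently the shuffled sequence a priori lies in $\Rat[\Geo_{\lambda^M}, \shift, \shuffle]$ rather than $\Rat[\Geo_\lambda, \shift, \shuffle]$, and it is precisely the union over $\lambda$ in the statement that absorbs this. The counting bound on the ambiguity of chained loops with several attached loops is the technical heart of the second part, but follows from a standard integer-partition estimate.
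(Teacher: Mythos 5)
Your left-to-right directions and the whole second item track the paper's proof: a deterministic unary automaton is a lasso, so its sequence is a shift of a shuffle of $\ell$ copies of $\Geo_\lambda$ where $\lambda$ is the loop weight; and for $\FinWA$ the paper likewise invokes Lemma~\ref{lem:poly-chained} and asserts that finite ambiguity forces each chained loop to carry at most one cycle. Your quasi-polynomial count of accepting runs (solutions of $\sum_i m_i\ell_i=n-k$) actually justifies that assertion, which the paper dismisses as ``easy to see'', so that part is a useful elaboration rather than a deviation.

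The genuine gap is in the right-to-left inclusion of the first item, precisely at the step you flag as the ``main obstacle'' --- and your resolution does not work. What has to be shown is that each \emph{fixed} class $\Rat[\Geo_\lambda,\shift,\shuffle]$ is contained in $\DetWA$, so every shuffle performed inside that class must land back in $\DetWA$; the outer union over $\lambda$ absorbs nothing, because each class in the union is already closed under shuffle by definition. But the shuffle of two deterministic lassos with the same loop weight and \emph{different} loop lengths need not be in $\DetWA$ at all. Take $\lambda=2$, let $\aa=\langle 2^n\rangle_{n\in\N}$ and $\vv=\shuffleop{\aa,\aa}$, so $v_n=2^{\lfloor n/2\rfloor}$; both lie in $\Rat[\Geo_2,\shuffle]$ and in $\DetWA$ (lassos of loop lengths $1$ and $2$, each of loop weight $2$). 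Their shuffle $\uu=\shuffleop{\aa,\vv}$ has $u_{2n}=2^n$ and $u_{2n+1}=2^{\lfloor n/2\rfloor}$. A deterministic unary automaton computing a nowhere-zero sequence satisfies $u_{s+L}=\Lambda\, u_s$ for all large $s$, with a single loop length $L$ and loop weight $\Lambda$; here the even positions force $\Lambda^2=2^{L}$ while the odd positions force $\Lambda^2=2^{\lceil L/2\rceil}$ or $2^{\lfloor L/2\rfloor}$, which is impossible for $L\ge 1$. Your round-robin product exhibits the same defect: around its cycle of length $k\cdot\lcm(\ell_1,\dots,\ell_k)$ the $i$-th component accumulates $\lambda^{\lcm(\ell_1,\dots,\ell_k)/\ell_i}$, and these exponents differ across $i$, so there is no single loop weight and the construction cannot output the correct value at every phase.

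For what it is worth, the paper's own proof is vulnerable at the same spot: it merges the automata $\A_i[k]$ into one deterministic automaton on the grounds that they are ``lasso automata with the same lengths of the loop'', which holds only when the shuffled automata have equal loop lengths (e.g.\ a single shuffle applied directly to members of $\Geo_\lambda$, which is all the left-to-right direction ever produces), not for nested shuffles of unequal arities. So the inclusion $\Rat[\Geo_\lambda,\shift,\shuffle]\subseteq\DetWA$ needs either a restriction on the shape of the expressions or a different argument; as written, neither your proof of that step nor the paper's goes through.
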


\begin{proof}
We start by proving $\DetWA = \bigcup_{\lambda \in \Q} \Rat[\Geo_\lambda, \shift, \shuffle]$.

($\subseteq$) Since the automaton is deterministic it has a shape of a lasso, \ie the states can be partitioned into a path such that the last state on the path is in a loop. Let $\lambda$ be the value obtained by multiplying all values on the loop, let $l$ be the length of the loop and let $m$ be the length of the path. Then it is easy to see that the sequence is obtained by first taking a shuffle of $l$ sequences in $\Geo_\lambda$ and then shifting it $m$ times.

($\supseteq$) We already know that $\Geo_\lambda$ are definable by deterministic weighted automata from Figure~\ref{fig:arithmetic-geometric}. Closure under shift follows from the construction in the proof of $\PRat \subseteq \PolyWA$ because it preserves the property of being deterministic. The shuffle construction preserves this property only up to a certain point. The construction of each automaton $\A_i[k]$ is deterministic but taking their sum does not yield explicitly a deterministic automaton. It suffices to observe that by construction $\A_i[k]$ are all lasso automata with the same lengths of the loop. Moreover, every word is accepted by at most one $\A_i[k]$.
To define the final automaton consider $\A_i[k]$ with the longest path. The final automaton will be $\A_i[k]$ with modified transitions and final outputs. Indeed we add other automata one by one, and for every accepting state we readjust the ingoing and outgoing transitions to give the correct value.

Proof of $\FinWA = \Rat[\Geo, +, \shift, \shuffle]$.

($\subseteq$) By Lemma~\ref{lem:poly-chained} we know that each automaton in $\FinWA$ is a union of chained loops. It is easy to see that every such chained loop has to be a lasso otherwise it will contradict the assumption that the automaton is finitely ambiguous. Then the construction follows by doing the construction for every lasso as in the proof of $\DetWA = \bigcup_{\lambda \in \Q} \Rat[\Geo_\lambda, \shift, \shuffle]$ and using $+$ to deal with the union.

($\supseteq$) This follows the same steps as the proof of $\DetWA = \bigcup_{\lambda \in \Q} \Rat[\Geo_\lambda, \shift, \shuffle]$. It is even simpler because we can take a union of two automata and remain in the class of $\FinWA$.
\end{proof}

We give examples witnessing the strict inclusions $\DetWA \subsetneq \FinWA \subsetneq \PolyWA \subsetneq \WA$
and $\kWA \subsetneq \kplusoneWA$.
\begin{lemma}\label{lem:hierarchy}\hfill
\begin{itemize}
	\item $\aa = \shuffleop{\langle 2^n \rangle_{n \in \N}, \langle 1 \rangle_{n \in \N}}$ is in $\oneWA$ but not in $\DetWA$,
	\item $\uu_k$ defined by $u_n = 1^n + 2^n + \cdots + (k+1)^n$ is in $\kplusoneWA$ but not in $\kWA$,
	\item $\vv$ defined by $v_n = n$ is in $\PolyWA$ but not in $\FinWA$;
	\item Fibonacci is in $\WA$ but not in $\PolyWA$.
\end{itemize}
\end{lemma}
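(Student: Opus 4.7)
\emph{Upper bounds.} For item~1, I would build two disjoint one-state lassos with complementary supports: one with a self-loop of weight $2$ on even positions producing $\langle 2^n \rangle$, and one with a self-loop of weight $1$ on odd positions producing the constant $1$, so that every input length admits exactly one accepting run. For item~2, take the disjoint union of $k+1$ single-state automata, the $i$-th having initial weight $1$, final weight $1$, and self-loop of weight $i$; this recognises $\uu_k$ with exactly $k+1$ accepting runs per input. For item~3, the arithmetic automaton of Figure~\ref{fig:arithgeo} already witnesses $\vv \in \PolyWA$, and membership of Fibonacci in $\WA$ is Example~\ref{ex:weighted}.

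\emph{Separations for items~1, 3, 4.} For item~1, Lemma~\ref{lem:wa-classes} reduces the question to $\aa \notin \Rat[\Geo_\lambda, \shift, \shuffle]$ for every $\lambda$: any such sequence is, on each residue class modulo the loop length, geometric with the common ratio $\lambda$, but the even-indexed and odd-indexed subsequences of $\aa$ demand $\lambda = 2$ and $\lambda = 1$ simultaneously. For item~3, $\FinWA = \Rat[\Geo, +, \shift, \shuffle]$ implies that on each residue class a $\FinWA$ sequence is a finite sum $\sum_i c_i \lambda_i^n$ of simple geometrics, i.e.\ an exponential polynomial with only simple eigenvalues, whereas $v_n = n$ has characteristic polynomial $(x - 1)^2$ requiring the eigenvalue $1$ with multiplicity two. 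For item~4, chaining Lemmas~\ref{lem:poly-chained}, \ref{lem:chained}, and~\ref{lem:euclidian} shows that every $\PolyWA$ sequence has formal series of the form $\sum_i \frac{R_i}{(1 - \lambda_i x^{\ell_i})^{k_i}}$ with $\lambda_i \in \Q$, so its eigenvalues are roots of rational numbers; the Fibonacci eigenvalue $\phi = \frac{1 + \sqrt{5}}{2}$ is not, since writing $\phi^n = a_n + b_n \phi$ with $a_n, b_n \in \Q$ one checks by induction that $b_n = F_n \neq 0$ for all $n \geq 1$, contradicting $\phi^n \in \Q$ by the $\Q$-linear independence of $\{1, \phi\}$.

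\emph{Separation for item~2 (the main obstacle).} To prove $\uu_k \notin \kWA$, I would first strengthen Lemma~\ref{lem:poly-chained} using the fact that a chained loop with $s$ loops has $\Theta(n^{s-1})$ runs of length $n$: finite ambiguity forces every chained loop to be a single lasso, so a $k$-ambiguous WA is equivalent to a union of lassos whose supports cover each position at most $k$ times. Translated to formal series, any decomposition of $\sum_{j=1}^{k+1} \frac{1}{1 - jx}$ as $\sum_i \frac{\alpha_i x^{m_i}}{1 - \lambda_i x^{\ell_i}}$ must respect this coverage bound. Each term contributes to at most one real positive pole $1/j$, namely the one with $\lambda_i = j^{\ell_i}$, and matching residues at the $k+1$ required poles $1/1, \ldots, 1/(k+1)$ yields $k+1$ independent linear equations, each needing at least one contributing term. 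The main obstacle is to rule out clever arrangements with $\ell_i > 1$ that exploit cancellations of the spurious complex poles $\omega^r/j$ to reduce coverage below $k+1$; I expect a Fourier/Vandermonde argument on the residues at the $\ell_i$-th roots of unity to force the total density $\sum_i 1/\ell_i$ to be at least $k+1$, yielding the desired ambiguity lower bound.
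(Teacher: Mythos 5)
First, note that the paper itself does not prove items 1--3 of this lemma (it declares them ``simple but technical'' and omits them), and obtains item 4 as a corollary of Theorem~\ref{th:eigenvalues}; so for most of the statement your write-up is the only proof on the table. Your arguments for items 1, 3 and 4 are sound and essentially the intended ones: the upper bounds are the standard union-of-lassos constructions, the separation in item 1 follows from Lemma~\ref{lem:wa-classes} (with the small precision that for a loop of length $\ell$ the even residue classes force $\lambda=2^{\ell/2}$ and the odd ones $\lambda=1$ when $\ell$ is even, while for odd $\ell$ each residue class interleaves the two subsequences and is not geometric at all), item 3 follows from the linear independence of $n\mapsto n$ from pure geometric sequences, and item 4 is exactly the paper's route through Lemmas~\ref{lem:poly-chained}, \ref{lem:chained} and~\ref{lem:euclidian}, together with a correct verification that $\frac{1+\sqrt 5}{2}$ is not a root of a rational number.

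The genuine gap is the lower bound in item 2, which you yourself flag as unresolved. What you have established is that each lasso term $\frac{\alpha_i x^{m_i}}{1-\lambda_i x^{\ell_i}}$ has at most one positive real pole, so at least $k+1$ lassos are needed; this does not yet bound the ambiguity, since the lassos could a priori accept pairwise disjoint sets of lengths. The missing step can be closed without any Fourier or Vandermonde machinery, by a covering argument. Group the lassos by the positive real pole they carry, $T_j=\set{i : \lambda_i=j^{\ell_i}}$ for $j=1,\dots,k+1$; these groups are pairwise disjoint because the corresponding pole circles $|x|=1/j$ are. By uniqueness of partial fraction decompositions, $\sum_{i\in T_j}\frac{\alpha_i x^{m_i}}{1-j^{\ell_i}x^{\ell_i}}-\frac{1}{1-jx}$ has no poles and hence is a polynomial, so for all sufficiently large $n$ the coefficient of $x^n$ in the group sum equals $j^n\neq 0$. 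Since lasso $i$ contributes to the coefficient of $x^n$ only when $n\equiv m_i \pmod{\ell_i}$, the arithmetic progressions $m_i+\ell_i\N$ for $i\in T_j$ must cover all sufficiently large integers, and counting residues modulo the least common multiple of the $\ell_i$ gives $\sum_{i\in T_j}1/\ell_i\ge 1$. Summing over the $k+1$ disjoint groups yields $\sum_i 1/\ell_i\ge k+1>k$. But each lasso accepts exactly one arithmetic progression of lengths (with common difference $\ell_i$) and has at most one run per length, so $\sum_i 1/\ell_i$ is the asymptotic average over $n\le N$ of the number of accepting runs of length $n$; hence some length admits at least $k+1$ accepting runs, contradicting $k$-ambiguity. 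With this paragraph added your proof is complete; as written, the proposal does not yet prove item 2.
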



We omit the simple but technical proofs of the first three items.
Only the last item will be proved in Section~\ref{sec:lrs}, it follows from the fact that $\PolyWA = \PRat$ 
is equal to the class of LRS whose eigenvalues are roots of rational numbers. 
As mentioned in Example~\ref{ex:fib} the characteristic polynomial of the Fibonacci sequence is
$x^2-x-1$, so its eigenvalues are not roots of rationals.

\section{Characterisation with copyless cost-register automata}
\label{sec:ccra}
Cost-register automata (CRA)~\cite{AlurDDRY13} are deterministic automata with write-only registers, where 
each transition updates the registers using addition and multiplication. Like in Section~\ref{sec:poly} we will consider only the variant of the model over a one-letter alphabet recognising functions $f : \N \to \Q$.

Let $\X$ be a set of \emph{variables (registers)}. 
The set of \emph{expressions} $\Expr$ is generated by the following grammar
$$
e \; ::= \; x \; \mid \; r \; \mid \; e+e \; \mid \; e \cdot e,
$$
where $x \in \X$ and $r \in \Q$. 
A \emph{substitution} is a mapping $\sigma : \X \to \Expr$. 
We let $\Subs$ denote the set of all substitutions.
A \emph{valuation} is a function $\sigma : \X \to \Q$, it is a special case of substitutions, where expressions are limited to constants.
We freely compose these objects: for instance let $\X = \set{x}$, define the valuation $\nu_0(x) = 0$, the substitution 
$\sigma(x) = x+1$ and the expression $e = 2x$. 
Then $\nu_0 \circ \sigma^n \circ e = 2n$. We see this computation as the output of a $1$-register machine which initialises $x$ with $0$, 
increments its value at each step and outputs its double value.

Formally, a CRA is a tuple $\A = (Q,\X,\delta,q_0,\nu_0,\mu)$, where $Q$ is the set of states, $\X$ is the set of registers, 
$\delta : Q \to Q \times \Subs$ is the transition function, $q_0$ is the initial state, $\nu_0 : \X \to \Q$ is the initial valuation 
and $\mu : Q \to \Q$ is the final output function. 
The output of $\A$ on $n$ is defined by the unique run of length $n$:
let $q_0 \to q_1 \to \dots \to q_n$ such that $\delta(q_i) = (q_{i+1}, \sigma_{i+1})$
\[
\sem{\A}(n) = \nu_0 \circ \sigma_1 \circ \cdots \circ \sigma_n \circ \mu(q_n).
\]

A CRA is said to be linear if its transitions and output function use only linear expressions, \ie 
such that in the grammar $e \cdot e$ is restricted to $e \cdot r$.
We denote $\LCRA$ the class of sequences recognised by linear CRA, which is known to be equivalent to the class $\WA$~\cite{AlurDDRY13}.
For instance, the following linear CRA recognises the Fibonacci sequence.

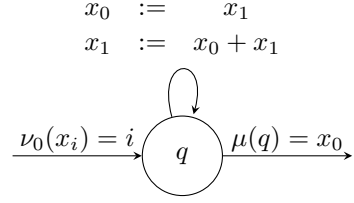
\begin{figure}[!ht]
\centering
\begin{tikzpicture}[>=stealth]
\node[state,inner sep=0pt,minimum size=30pt] (q) {$q$};

\draw (q.west) edge[<-] node[above=0.2cm,anchor=mid] {$\nu_0(x_i) = i$} ++(-1.7cm,0);
\draw (q.east) edge[->] node[above=0.2cm,anchor=mid] {$\mu(q) = x_0$} ++(1.7cm,0);
\draw (q) edge[->,loop above] node {
$
\begin{array}{ccc}
x_0 & := & x_1 \\
x_1 & := & x_0 + x_1
\end{array}
$
} (q);
\end{tikzpicture}
\caption{A linear CRA recognising the Fibonacci sequence. There is only one state and two variables $\X = \set{x_0, x_1}$. Since there is only one state the transitions are presented using only the expression that is applied every time.}\label{fig:copyless}
\end{figure}

A substitution $\sigma$ is called copyless if each register is used at most once in all $\sigma(x)$. It is easy to observe that a composition of copyless substitutions is a copyless substitution.
A CRA is said to be copyless if in each transition, each substitution is copyless. For example in Figure~\ref{fig:copyless} the register $x_1$ is used twice in the substitution so it is not a copyless automaton. 
We let $\CCRA$ denote the class of sequences recognised by copyless cost register automata (CCRA). 
In~\cite{MazowieckiR15} it is shown that $\CCRA$ is a subclass of linear CRA. We show that this is another class characterising $\PRat$.

\begin{theorem}
$\PRat = \CCRA$
\end{theorem}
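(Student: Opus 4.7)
The plan is to prove the two inclusions separately.

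For $\PRat \subseteq \CCRA$, we mirror the argument used in Section~\ref{sec:poly} for $\PolyWA$. An arithmetic sequence is recognised by a one-register CCRA with $\sigma(x) = x + b$, initial value $a$, and output expression $x$; a geometric sequence analogously with $\sigma(x) = \lambda x$. Closure under sum and Hadamard product is obtained by taking the product of state spaces combined with the disjoint union of register sets; since the two substitutions then act on disjoint registers, copylessness is preserved, and the output expression becomes the sum, respectively the product, of the two sub-outputs. Shift is obtained by prepending a fresh state with constant output $a$ and identity substitution. For shuffle we cycle through $k$ phases, at each step advancing only the sub-automaton corresponding to the current phase and leaving the others unchanged by the identity substitution; the output at step $m$ is that of the sub-automaton in the active phase, and disjointness of register sets preserves copylessness.

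For $\CCRA \subseteq \PRat$, we invoke Theorem~\ref{th:poly} and establish $\CCRA \subseteq \PolyWA$. Given a CCRA $\A = (Q, \X, \delta, q_0, \nu_0, \mu)$, we first decompose each output expression $\mu(q)$ as a finite sum of monomials $c \cdot x_{i_1} \cdots x_{i_d}$, and build a separate WA for each monomial-state pair, taking the disjoint union at the end. For a degree-one monomial, i.e.\ the sequence $\langle \nu_n(x_0) \rangle_n$, we construct a WA whose states are $(Q \times \X) \cup (Q \times \{s\})$, where $s$ is a ``constant source'' node. From $(q, y)$ with $\delta(q) = (q', \sigma)$, copylessness guarantees at most one outgoing transition, namely to $(q', x)$ where $y$ appears in $\sigma(x)$, weighted by the corresponding coefficient. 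From $(q, s)$ there is a self-loop to $(q', s)$ of weight $1$ and, for each register $x$, an injection edge to $(q', x)$ of weight equal to the constant term of $\sigma(x)$. Initial weights encode $\nu_0$ together with a unit weight on $(q_0, s)$, and only the states $(q, x_0)$ carry nonzero final weight. Runs starting from a register state are deterministic by copylessness, and runs leaving the constant source are indexed by their time of injection, yielding $O(n)$ ambiguity. For a monomial of degree $d$, we take the $d$-fold product of this WA; the resulting ambiguity is $O(n^d)$, still polynomial in $n$.

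The main obstacle is ensuring that the ambiguity stays polynomial despite three potential sources of nondeterminism: the register-to-register propagation, the constant terms in each substitution, and the non-linearity of $\mu$. The copyless assumption exactly cancels the only one which could cause exponential blow-up, namely the register propagation, while the other two contribute only polynomial factors whose exponents are bounded by the fixed size of $\mu$.
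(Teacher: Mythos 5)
Your easy direction ($\PRat \subseteq \CCRA$) follows the paper's, which merely asserts that the standard closure constructions preserve copylessness; your explicit one-register machines and disjoint-register products are exactly what is meant. For $\CCRA \subseteq \PRat$ you take a genuinely different route. The paper works directly on the machine: it puts the CCRA in normal form (via~\cite{MR18}), uses determinism to decompose the unique run into a lasso, composes the substitutions around the loop into a single copyless substitution $\sigma$, and proves by induction on the register order that $\langle \nu \circ \sigma^n \circ e \rangle_{n\in\N}$ is in $\PRat$ --- the key point being that every register absent from its own update stabilises to a constant, so each self-dependent register eventually obeys $u(n+1)=a\,u(n)+b$, yielding arithmetic and geometric building blocks directly. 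You instead compile the CCRA into a polynomially ambiguous weighted automaton and invoke Theorem~\ref{th:poly}. Your reversed simulation with a constant-source state is sound for linear updates, the ambiguity count is right ($O(n)$ per register chain, $O(n^d)$ after the $d$-fold Hadamard product for a degree-$d$ monomial of $\mu$), and a pleasant by-product is that you need neither the normal form nor the lasso analysis; your argument also makes transparent that copylessness is precisely the hypothesis capping the ambiguity.

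There is, however, one genuine gap: you silently assume every substitution $\sigma(x)$ is a linear expression $\sum_{y} a_{x,y}\, y + c_x$ (``weighted by the corresponding coefficient'', ``the constant term of $\sigma(x)$''). The paper's expression grammar allows products $e \cdot e$, and copylessness only forbids a register from occurring twice, so $\sigma(x) = x_1 \cdot x_2$ with $x_1 \neq x_2$ is a legal copyless update that your weighted automaton cannot encode, since a transition matrix is intrinsically linear in the register values. You flag non-linearity of $\mu$ (and handle it by monomial decomposition and Hadamard products) but not non-linearity of the updates. The repair needs the same stabilisation argument the paper relies on: by copylessness, any register $y$ occurring in $\sigma(x)$ for $x \neq y$ is absent from $\sigma(y)$ and hence eventually constant, and $x$ occurs at most once in $\sigma(x)$, so after a bounded prefix every update is equivalent to an affine one; the prefix can be absorbed by shifts or unrolled into the automaton. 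Without this preprocessing your construction only establishes the inclusion for linear copyless CRA.
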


\subsection*{$\PRat \subseteq \CCRA$}

This inclusion is easy to prove, it requires to perform the classical constructions as in Section~\ref{sec:poly}
and to note that they respect the copyless restriction.

\subsection*{$\CCRA \subseteq \PRat$}
%
%
We make use of a simple property in~\cite{MR18}.
A substitution is in \emph{normal form} if there exists an order on the registers $x_1 < \cdots < x_k$ 
such that the substitutions updating registers respect the order: $\sigma(x_i)$ can use only registers $x_j$ such that $x_j \ge x_i$. 
A CCRA is in normal form if all substitutions used by it are in normal form, with the same order on the registers.
It is known that every CCRA has an equivalent CCRA in normal form~\cite[Proposition 1]{MR18}.
We will use this fact only to prove Lemma~\ref{lem:normal_form}, but in the construction we will assume that the CCRA is in normal form.


Consider a CCRA $\A$, we prove that the sequence $\uu$ it recognises is in $\PRat$.
We assume without loss of generality that $\A$ is in normal form.
Since $\A$ is deterministic it has the shape of a lasso: a tail of length $k$ and a loop of length $\ell$.
Let us fix $n \in \N$ and $\ell' \in \set{0,\ldots,\ell-1}$, the run is
\begin{align}\label{eq:lasso}
q_0 \to \dots \to q_k \to \left( p_0 \to \dots \to p_{\ell - 1} \right)^n \to p_0 \to \dots \to p_{\ell'}.
\end{align}
Let $\delta(q_i) = (q_{i+1 \mod \ell},\beta_{i})$ for $i \in \set{0,\dots,k}$,
with the convention that $q_{k+1} = p_0$, 
and $\delta(p_i) = (p_{i+1 \mod \ell},\sigma_{i})$ for $i \in \set{0,\dots,\ell-1}$.
Define
\[
\nu_0' = \nu_0 \circ \beta_0 \circ \dots \circ \beta_{k}\quad ;\quad 
\sigma = \sigma_0 \circ \dots \circ \sigma_{\ell-1}\quad ;\quad 
e = \sigma_0 \circ \dots \circ \sigma_{\ell'-1} \circ \mu(p_{\ell'}).
\] 
Notice that $\sigma$ is a copyless substitution since it is a composition of copyless substitutions.
We define the sequence $\uu[\ell']$ by 
\[
u_n[\ell'] = \nu_0' \circ \sigma^n \circ e.
\]
We will prove in Lemma~\ref{lem:normal_form}  that the sequence $\uu[\ell']$ is in $\PRat$.
The decomposition of the runs into a lasso implies the following equality:
\[
\uu = \langle u_0, u_1, \ldots, u_{k-1}, \shuffleop{\uu[0],\ldots,\uu[\ell-1]} \rangle,
\]
which implies that $\uu$ is in $\PRat$, provided the lemma below is true.

\begin{lemma}\label{lem:normal_form}
For every copyless substitution $\sigma$ in normal form, for all initial valuation $\nu$
and for all expression $e$, the sequence
\[
\langle \nu \circ \sigma^n \circ e \rangle_{n \in \N}
\]
is in $\PRat$.
\end{lemma}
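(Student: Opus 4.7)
I would proceed by induction on the number $k$ of registers of $\sigma$. Because $\sigma$ is in normal form, $\sigma(x_k)$ is an expression in $x_k$ alone, and by copylessness $x_k$ appears in it at most once, so $\sigma(x_k) = a x_k + b$ for some $a, b \in \Q$. Two cases arise: if $a \neq 0$, then copylessness forbids any further occurrence of $x_k$ in $\sigma(x_i)$ for $i < k$; if $a = 0$, then $\sigma(x_k) = b$ is a constant, but $x_k$ may still occur (at most once) in some $\sigma(x_j)$ with $j < k$.

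For the base case $k = 1$, the closed form $\sigma^n(x_1) = a^n x_1 + b(1 + a + \cdots + a^{n-1})$ exhibits $\langle \nu \circ \sigma^n(x_1) \rangle_{n \in \N}$ as a sum of a geometric sequence and a constant (or as an arithmetic sequence when $a = 1$), hence in $\PRat$. Any expression $e(x_1)$ is a polynomial in $x_1$, so $\langle \nu \circ \sigma^n(e) \rangle$ is a polynomial in a $\PRat$ sequence, and closure of $\PRat$ under sum and Hadamard product finishes this case.

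For the inductive step, in the subcase $a \neq 0$ the substitution $\sigma$ restricted to $\{x_1,\ldots,x_{k-1}\}$ is a well-defined copyless substitution in normal form on $k-1$ registers, and for $i < k$ the sequence $\langle \nu \circ \sigma^n(x_i) \rangle$ coincides with the sequence computed by this restriction and hence lies in $\PRat$ by the induction hypothesis, while $\langle \nu \circ \sigma^n(x_k) \rangle$ is handled as in the base case. In the subcase $a = 0$, I would define $\tilde\sigma$ on $\{x_1,\ldots,x_{k-1}\}$ by $\tilde\sigma(x_i) = \sigma(x_i)|_{x_k \leftarrow b}$, which is still copyless and in normal form. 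Writing $\nu_n(x_i) := \nu \circ \sigma^n(x_i)$, one observes that $\nu_n(x_k) = b$ for all $n \geq 1$, and hence for $n \geq 1$ and $i < k$ that $\nu_{n+1}(x_i) = \nu_n(\sigma(x_i)) = \nu_n(\tilde\sigma(x_i))$; an easy induction on $n$ then gives $\nu_n(x_i) = \nu_1 \circ \tilde\sigma^{n-1}(x_i)$. The induction hypothesis applied to $\tilde\sigma$ together with closure of $\PRat$ under shift places $\langle \nu_n(x_i) \rangle_{n \in \N}$ in $\PRat$.

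To finish, any expression $e$ is a polynomial in the registers, so $\nu \circ \sigma^n(e) = e(\nu_n(x_1),\ldots,\nu_n(x_k))$ is a fixed polynomial in the sequences $\langle \nu_n(x_i) \rangle_{n \in \N}$, each already shown to lie in $\PRat$; closure under sum and Hadamard product concludes. The most delicate step is the subcase $a = 0$: one must justify that although $\sigma$ may reintroduce $x_k$ through $\sigma(x_j)$ for some $j < k$, after the first step the value of $x_k$ is pinned to the constant $b$ forever, so from step $1$ onward substituting $b$ for $x_k$ in each $\sigma(x_i)$ gives a faithful description of the dynamics on the remaining $k-1$ registers.
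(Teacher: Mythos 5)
Your proof is correct, but it is organised differently from the paper's. The paper fixes $\sigma$ and argues register by register, by induction on the normal-form order, establishing a dichotomy: if $x$ does not occur in $\sigma(x)$ then $\langle \nu \circ \sigma^n(x) \rangle$ eventually stabilises to a constant, and if $x$ does occur in $\sigma(x)$ then copylessness forces every other register appearing in $\sigma(x)$ to be of the first kind, so for large $n$ the sequence satisfies a first-order affine recurrence $u_{n+1} = a u_n + b$ and is a shifted arithmetic or geometric-plus-constant sequence. Your induction is instead on the number of registers: you peel off the maximal register $x_k$, for which normal form and copylessness give $\sigma(x_k) = a x_k + b$, and you only need the affine analysis for that one register; the rest is delegated to the induction hypothesis, either directly (when $a \neq 0$, since copylessness then bars $x_k$ from every other update, so $\sigma$ genuinely restricts to the lower registers) or after hard-wiring the constant $b$ into $\tilde\sigma$ and shifting by one (when $a = 0$). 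Both arguments rest on the same two structural facts and finish identically via closure of $\PRat$ under sum, Hadamard product and shift. What the paper's route buys is the stronger explicit statement that \emph{every} register sequence is eventually constant or eventually affine-recurrent, which lets the authors write the resulting sequences down concretely; what your route buys is a cleaner outer induction that leans more on the closure properties, and incidentally a more careful justification that a copyless update of the top register is affine (the paper simply writes $\sigma(x)$ as a linear combination of registers without comment). Your treatment of the $a = 0$ subcase, where $x_k$ may still feed lower registers before being pinned to $b$, is the one genuinely delicate point and you handle it correctly.
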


\begin{proof}
We prove that the sequence $\uu_x = \nu \circ \sigma^n (x)$ is in $\PRat$ for every register $x$, \ie the lemma holds for $e = x$. The general case follows since $\PRat$ is closed under addition and product. 

We consider two cases. Suppose $x$ is not used in $\sigma(x)$. We prove that for $n$ big enough the sequence stabilises, \ie $\sigma^n(x) = \sigma^{n+1}(x) = c$ for some constant $c$. We show this by the induction on the order $<$ from the assumed normal form. If $x$ is the biggest element in the order $<$ then $\sigma(x)$ is a constant and thus $\sigma^n (x) = \sigma^{n+1} (x)$. For the induction step suppose $x$ is not the biggest element. If $\sigma(x)$ is a constant then the claim is trivial. Otherwise let $x_1, \ldots ,x_m$ be registers used in $\sigma(x)$. Since $\sigma$ is copyless then $x_i$ is not used in $\sigma(x_i)$ for every $i$. Hence by the induction assumption for every $i$ there exists $n_i$ such that $\sigma^{n}(x_i) = \sigma^{n+1}(x_i)$ for all $n \ge n_i$. It suffices to take $n = \max_i\{n_i \mid 1 \le i \le m\} + 1$. Since constant sequences are geometric sequences with $\lambda = 1$ then $\uu_x$ can be defined in $\PRat$ using shift.

Now suppose that $x$ is used in $\sigma(x)$. The expression $\sigma(x)$ is equivalent to $\sum_{i=0}^m a_i \cdot x_i$ for some constants $a_i$, where $x_0 = x$ and $x_i$ are pairwise different. Since $\sigma$ is copyless then for all $i > 0$ we know that $\sigma(x_i)$ does not use $x_i$. By the previous paragraph there exists $N$ such that $\sigma^N(x_i) = \sigma^{N+1}(x_i) = c_i$ for some constants $c_i$ for all $i > 0$. Let $n \ge N$. Then
\[
\nu \circ \sigma^{n+1}(x) \; = \; \nu \circ \sigma^n \circ \sigma(x) \; = \; \nu \circ \left( \sum_{i=0}^m a_i \cdot\sigma^n(x_i) \right) \; = \; a_0 \cdot \left( \nu \circ \sigma^n(x) \right) + \sum_{i = 1}^m a_i \cdot c_i. 
\]
Let $a = a_0$ and $b = \sum_{i = 1}^m a_i \cdot c_i$. We proved that for $n \ge N$ the sequence $\uu_x$ satisfies $u_x(n+1) = a \cdot u_x(n) + b$. It remains to prove that this sequence is in $\PRat$. It is enough to show that $\uu_x'(n) = \uu_x(n+N)$ is in $\PRat$ since to obtain $\uu_x$ it suffices to use shift $N$ times. There are two cases. If $a = 1$ then $\uu_x'(n)$ is an arithmetic sequence, which concludes the proof. If $a \neq 1$ then
\[
u_x'(n) \; = \; a^n \cdot u_x'(0) + \sum_{i = 0}^{n-1} a^i \cdot b \; = \; a^n \cdot u_x'(0) + b \cdot \frac{a^n - 1}{a - 1}.
\]
This is a sum of a geometric sequence $a^n \cdot (u_x'(0) + \frac{b}{a-1})$; and a constant sequence $- \frac{b}{a-1}$; which proves $\uu_x'$ is in $\PRat$.
\end{proof}

\begin{remark}
One can extract from this proof the equivalence between linear $\CCRA$ and $\Rat[\Arith \cup \Geo,+,\shift,\shuffle]$.
\end{remark}

It was recently shown that $\CCRA$ are strictly less expressive than weighted automata~\cite{MR18}. 
The proof goes by analysing the Fibonacci sequence. 
We will get as a corollary of our results a self-contained proof that $\LCRA$ and $\CCRA$ are different. 

\section{Characterisation with linear recurrence sequences and formal series}
\label{sec:lrs}
Our last two characterisations are as follows.

\begin{theorem}\label{th:eigenvalues}
$\PRat$ is the class of LRS whose eigenvalues are roots of rational numbers,
and equivalently whose formal series are $\frac{P}{Q}$ with $P,Q$ rational polynomials and the roots of $Q$ are roots of rational numbers.
\end{theorem}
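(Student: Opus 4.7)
The plan is to establish the equivalence of three characterizations: $\PRat$, LRS whose eigenvalues are roots of rationals, and sequences whose formal series $P/Q$ have denominator roots that are roots of rationals. I would first dispose of the equivalence between the two latter characterizations, which is classical: an LRS with characteristic polynomial $Q(x) = x^k - a_1 x^{k-1} - \cdots - a_k$ has formal series $P/\widetilde{Q}$ where $\widetilde{Q}(x) = x^k Q(1/x) = 1 - a_1 x - \cdots - a_k x^k$, so the roots of $\widetilde{Q}$ are precisely the reciprocals of the eigenvalues. Since the property of being a root of a rational number is preserved under taking reciprocals ($\lambda^m \in \Q^*$ implies $(1/\lambda)^m = 1/\lambda^m \in \Q^*$), the two characterizations describe the same class of sequences.

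For the equivalence with $\PRat$, both inclusions essentially follow from the work done in Section~\ref{sec:poly}. For the inclusion of the formal series characterization into $\PRat$, I would invoke Lemma~\ref{lem:euclidian} to decompose $P/Q$ into a sum of simple pieces $R/(1 - \lambda x^\ell)^k$ with $\lambda$ rational, then apply Lemma~\ref{lem:rationa-roots} to place each such piece in $\PRat$, using closure under sum and shift to handle the polynomial numerator $R$. Conversely, to show that $\PRat$ is contained in the formal series characterization, I would use Theorem~\ref{th:poly} ($\PRat = \PolyWA$) together with the analysis already carried out in the proof of $\PolyWA \subseteq \PRat$: by Lemmas~\ref{lem:poly-chained}, \ref{lem:chained}, and \ref{lem:euclidian}, the formal series of any polynomially ambiguous weighted automaton is of the form $P/Q$ with roots of $Q$ being roots of rationals.

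The main obstacle, though circumvented by routing through $\PolyWA$, is the Hadamard product case. If one were to prove $\PRat \subseteq$ (formal series characterization) by direct structural induction on poly-rational expressions, one would need closure of $P/Q$-style series under $\times$. The base cases (arithmetic, geometric) and closure under $+$, $\shift$, and $\shuffle$ are straightforward verifications: for instance, the shuffle of $k$ series $S_0, \ldots, S_{k-1}$ has formal series $\sum_{i} x^i S_i(x^k)$, and if $S_i = P_i/Q_i$ with roots of $Q_i$ being roots of rationals, then a root $\alpha$ of $Q_i(x^k)$ satisfies $\alpha^k = \beta$ for some root $\beta$ of $Q_i$, and if $\beta^m \in \Q$ then $\alpha^{km} \in \Q$. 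The Hadamard product relies on the classical fact that the eigenvalues of the Hadamard product of two LRS lie in the set of pairwise products of eigenvalues of the factors, combined with closure of the roots of rationals under multiplication ($\lambda^m = p \in \Q$ and $\mu^n = q \in \Q$ imply $(\lambda\mu)^{mn} = p^n q^m \in \Q$). Either proof route yields the theorem, and the automaton route has the advantage of reusing the decomposition into chained loops already in hand.
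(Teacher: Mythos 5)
Your proof is correct and follows essentially the same route as the paper: one direction goes through $\PRat = \PolyWA$ and the chained-loop decomposition (Lemmas~\ref{lem:poly-chained} and~\ref{lem:chained}), the other through Lemma~\ref{lem:classical_LRS}, the partial-fraction decomposition of Lemma~\ref{lem:euclidian}, and Lemma~\ref{lem:rationa-roots}. Your explicit observation that the denominator of the formal series is the reciprocal polynomial of the characteristic polynomial, and that being a root of a rational number is preserved under taking reciprocals, usefully makes precise a point the paper leaves implicit when it identifies the two in Lemma~\ref{lem:classical_LRS}.
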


Before proving the theorem, we note that we can now substantiate the claim that 
the Fibonacci sequence is not in $\PRat$ (hence not in $\CCRA$ and $\PolyWA$),
since its eigenvalues are not roots of rational numbers.

We rely on the following classical result about LRS, see e.g.~\cite{0076724}.

\begin{lemma}\label{lem:classical_LRS}
Let $\uu$ be an LRS and $Q$ its characteristic polynomial.
The formal series induced by $\uu$ is $\frac{P}{Q}$ for some rational polynomial $P$.
\end{lemma}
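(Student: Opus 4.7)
The plan is a direct calculation in the ring of formal power series. First I would introduce the polynomial $R(x) = 1 - a_1 x - a_2 x^2 - \cdots - a_k x^k$, which is the reciprocal polynomial of $Q$, that is, $R(x) = x^k \, Q(1/x)$, so $R$ and $Q$ have reciprocal roots and the same degree. Then I would compute the product $R(x) \cdot S(x)$, where $S(x) = \sum_{n \geq 0} u_n x^n$ is the formal series induced by $\uu$.

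The heart of the argument is the observation that, for every $m \geq k$, the coefficient of $x^m$ in $R(x) \cdot S(x)$ is
\[
u_m - a_1 u_{m-1} - a_2 u_{m-2} - \cdots - a_k u_{m-k},
\]
which is exactly zero by the linear recurrence~\eqref{eq:lrs} instantiated with $n = m - k$. Consequently $R(x) \cdot S(x)$ has only finitely many non-zero coefficients, namely those for $m \in \{0, 1, \ldots, k-1\}$, and is therefore a polynomial $P_0(x) \in \Q[x]$ of degree at most $k-1$, whose coefficients are explicit $\Q$-linear expressions in the initial values $u_0, \ldots, u_{k-1}$ and the recurrence coefficients $a_1, \ldots, a_k$.

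This immediately gives $S(x) = P_0(x)/R(x)$. To recast this in the form $P/Q$ asked for in the statement, I would use the identity $R(x) = x^k\, Q(1/x)$ to convert the denominator $R$ into $Q$ via the standard reciprocation trick, absorbing the compensating monomial factors into the numerator to produce the desired polynomial $P$.

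The only mildly delicate point is this reciprocal-polynomial bookkeeping: the denominator that naturally appears when one multiplies $S(x)$ by a polynomial annihilating the recurrence is the reciprocal $R$ of $Q$, not $Q$ itself, and care is needed to present the final expression in the form $P/Q$ stated in the lemma. Beyond this purely notational step, the proof is a one-line coefficient computation translating the recurrence into an algebraic identity on power series.
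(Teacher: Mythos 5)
The paper does not actually prove this lemma: it is quoted as a classical fact with a reference, so there is no in-paper argument to compare against. Your central computation is the standard one and is correct: with $R(x) = 1 - a_1 x - \cdots - a_k x^k$, the coefficient of $x^m$ in $R(x)\cdot S(x)$ for $m \ge k$ is $u_m - a_1 u_{m-1} - \cdots - a_k u_{m-k} = 0$ by the recurrence, so $R(x)\cdot S(x)$ is a polynomial $P_0$ of degree at most $k-1$ and $S = P_0 / R$.

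The final step, however, does not work as described. You propose to convert the denominator $R$ into the characteristic polynomial $Q$ by ``absorbing compensating monomial factors,'' but the identity $R(x) = x^k Q(1/x)$ relates $R$ at $x$ to $Q$ at $1/x$, not at $x$; reciprocation yields $S(1/x) = \tilde{P}(x)/Q(x)$, which is not the claimed identity for $S(x)$. Concretely, for Fibonacci one has $S(x) = \frac{x}{1-x-x^2}$ (as in Example~\ref{ex:fib}) while $Q(x) = x^2 - x - 1$, and $\frac{x(x^2-x-1)}{1-x-x^2}$ is not a polynomial since $x^2-x-1$ and $x^2+x-1$ have different root sets; so $S$ is genuinely \emph{not} of the form $P/Q$ with $P$ a polynomial and $Q$ literally the characteristic polynomial. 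The resolution is that the lemma's statement is an abuse of notation: the denominator is really the reciprocal polynomial $R(x) = x^k Q(1/x)$, exactly the form your computation produces, and this is consistent with how the paper uses the lemma in Theorem~\ref{th:eigenvalues}, where all that matters is that the roots of the denominator are the inverses of the nonzero eigenvalues --- a correspondence under which ``being a root of a rational number'' is preserved. So you should stop at $S = P_0/R$ and note this reconciliation, rather than attempt a conversion that is impossible in general.
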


For both inclusions we rely on Theorem~\ref{th:poly} stating that $\PRat = \PolyWA$
and the decompositions obtained in the subsequent lemmas.

\subsection*{$\PRat \subseteq$ LRS whose eigenvalues are roots of rational numbers}

By Lemma~\ref{lem:poly-chained} and Lemma~\ref{lem:chained} the formal series of sequences in $\PolyWA$ 
are sums and Cauchy products of formal series of the form $\frac{R}{1 - \lambda x^\ell}$, where $R$ is a rational polynomial, $\ell \in \N$ and $\lambda \in \Q$. 
The roots of $1 - \lambda x^\ell$ are roots of $\frac{1}{\lambda}$, so the roots of the characteristic polynomial are roots of rational numbers.

\subsection*{LRS whose eigenvalues are roots of rational numbers $\subseteq \PRat$}

Consider an LRS whose eigenvalues are roots of rational numbers.
Thanks to Lemma~\ref{lem:classical_LRS} the formal series it induces is $\frac{P}{Q}$ with $P,Q$ rational polynomials 
and the roots of $Q$ are roots of rational numbers. 
By Lemma~\ref{lem:euclidian} the formal series can be written as a sum of formal series of the form $\frac{R}{(1 - \lambda x^\ell)^k}$ 
for rational polynomials $R$, rational number $\lambda$, and $\ell,k$ natural numbers. 
It follows from Lemma~\ref{lem:rationa-roots} and the closure of $\PRat$ under sum and shift that such sequences belong to $\PRat$.

\section{Conclusion}
\label{sec:conclusion}
We introduced a class of linear recurrence sequences and obtained several characterisations.
The most surprising equivalence is $\CCRA = \PolyWA$.
This equality is very particular to our setting: for instance the two classes are incomparable, \ie neither of the inclusions hold, for tropical semirings~\cite{MR18,MazowieckiR18}.
We also conjecture that these classes are incomparable over the rational semiring for general alphabets (of size bigger than~1).

We leave open the precise complexity of the Skolem problem for $\PRat$.
Recent progress has been made for a subclass of $\PRat$~\cite{ABV17}: 
the Skolem problem for LRS whose eigenvalues are roots of unity is NP-complete.
Our class is more general since we consider LRS whose eigenvalues are roots of rational numbers,
so the NP-hardness also applies. However the algorithm constructed in~\cite{ABV17} does not extend to our class.



\bibliography{biblio}

\begin{thebibliography}{10}

\bibitem{ABV17}
S.~Akshay, Nikhil Balaji, and Nikhil Vyas.
\newblock Complexity of restricted variants of skolem and related problems.
\newblock In {\em 42nd International Symposium on Mathematical Foundations of
  Computer Science, {MFCS} 2017, August 21-25, 2017 - Aalborg, Denmark}, pages
  78:1--78:14, 2017.
\newblock URL: \url{https://doi.org/10.4230/LIPIcs.MFCS.2017.78}, \href
  {http://dx.doi.org/10.4230/LIPIcs.MFCS.2017.78}
  {\path{doi:10.4230/LIPIcs.MFCS.2017.78}}.

\bibitem{AlmagorCMP18}
Shaull Almagor, Micha{\"{e}}l Cadilhac, Filip Mazowiecki, and Guillermo~A.
  P{\'{e}}rez.
\newblock Weak cost register automata are still powerful.
\newblock In {\em Developments in Language Theory - 22nd International
  Conference, {DLT} 2018, Tokyo, Japan, September 10-14, 2018, Proceedings},
  pages 83--95, 2018.
\newblock URL: \url{https://doi.org/10.1007/978-3-319-98654-8\_7}, \href
  {http://dx.doi.org/10.1007/978-3-319-98654-8\_7}
  {\path{doi:10.1007/978-3-319-98654-8\_7}}.

\bibitem{AlurDDRY13}
Rajeev Alur, Loris D'Antoni, Jyotirmoy~V. Deshmukh, Mukund Raghothaman, and
  Yifei Yuan.
\newblock Regular functions and cost register automata.
\newblock In {\em 28th Annual {ACM/IEEE} Symposium on Logic in Computer
  Science, {LICS} 2013, New Orleans, LA, USA, June 25-28, 2013}, pages 13--22,
  2013.
\newblock URL: \url{https://doi.org/10.1109/LICS.2013.65}, \href
  {http://dx.doi.org/10.1109/LICS.2013.65} {\path{doi:10.1109/LICS.2013.65}}.

\bibitem{Bousquet-Melou05}
Mireille Bousquet{-}M{\'{e}}lou.
\newblock Algebraic generating functions in enumerative combinatorics and
  context-free languages.
\newblock In {\em {STACS} 2005, 22nd Annual Symposium on Theoretical Aspects of
  Computer Science, Stuttgart, Germany, February 24-26, 2005, Proceedings},
  pages 18--35, 2005.
\newblock URL: \url{https://doi.org/10.1007/978-3-540-31856-9\_2}, \href
  {http://dx.doi.org/10.1007/978-3-540-31856-9\_2}
  {\path{doi:10.1007/978-3-540-31856-9\_2}}.

\bibitem{DrosteG07}
Manfred Droste and Paul Gastin.
\newblock Weighted automata and weighted logics.
\newblock {\em Theoretical Computer Science}, 380(1-2):69--86, 2007.
\newblock URL: \url{https://doi.org/10.1016/j.tcs.2007.02.055}, \href
  {http://dx.doi.org/10.1016/j.tcs.2007.02.055}
  {\path{doi:10.1016/j.tcs.2007.02.055}}.

\bibitem{DrosteHWA09}
Manfred Droste, Werner Kuich, and Heiko Vogler.
\newblock {\em Handbook of Weighted Automata}.
\newblock Springer, 1st edition, 2009.

\bibitem{FijalkowRW17}
Nathana{\"{e}}l Fijalkow, Cristian Riveros, and James Worrell.
\newblock Probabilistic automata of bounded ambiguity.
\newblock In Roland Meyer and Uwe Nestmann, editors, {\em 28th International
  Conference on Concurrency Theory, {CONCUR} 2017, September 5-8, 2017, Berlin,
  Germany}, volume~85 of {\em LIPIcs}, pages 19:1--19:14. Schloss Dagstuhl -
  Leibniz-Zentrum fuer Informatik, 2017.
\newblock URL: \url{https://doi.org/10.4230/LIPIcs.CONCUR.2017.19}, \href
  {http://dx.doi.org/10.4230/LIPIcs.CONCUR.2017.19}
  {\path{doi:10.4230/LIPIcs.CONCUR.2017.19}}.

\bibitem{0076724}
Ronald~L. Graham, Donald~E. Knuth, and Oren Patashnik.
\newblock {\em Concrete mathematics - a foundation for computer science {(2.}
  ed.)}.
\newblock Addison-Wesley, 1994.

\bibitem{KirstenL09}
Daniel Kirsten and Sylvain Lombardy.
\newblock Deciding unambiguity and sequentiality of polynomially ambiguous
  min-plus automata.
\newblock In {\em 26th International Symposium on Theoretical Aspects of
  Computer Science, {STACS} 2009, February 26-28, 2009, Freiburg, Germany,
  Proceedings}, pages 589--600, 2009.
\newblock URL: \url{https://doi.org/10.4230/LIPIcs.STACS.2009.1850}, \href
  {http://dx.doi.org/10.4230/LIPIcs.STACS.2009.1850}
  {\path{doi:10.4230/LIPIcs.STACS.2009.1850}}.

\bibitem{KlimannLMP04}
Ines Klimann, Sylvain Lombardy, Jean Mairesse, and Christophe Prieur.
\newblock Deciding unambiguity and sequentiality from a finitely ambiguous
  max-plus automaton.
\newblock {\em Theoretical Computer Science}, 327(3):349--373, 2004.
\newblock URL: \url{https://doi.org/10.1016/j.tcs.2004.02.049}, \href
  {http://dx.doi.org/10.1016/j.tcs.2004.02.049}
  {\path{doi:10.1016/j.tcs.2004.02.049}}.

\bibitem{KreutzerR13}
Stephan Kreutzer and Cristian Riveros.
\newblock Quantitative monadic second-order logic.
\newblock In {\em 28th Annual {ACM/IEEE} Symposium on Logic in Computer
  Science, {LICS} 2013, New Orleans, LA, USA, June 25-28, 2013}, pages
  113--122, 2013.
\newblock URL: \url{https://doi.org/10.1109/LICS.2013.16}, \href
  {http://dx.doi.org/10.1109/LICS.2013.16} {\path{doi:10.1109/LICS.2013.16}}.

\bibitem{MazowieckiR15}
Filip Mazowiecki and Cristian Riveros.
\newblock Maximal partition logic: Towards a logical characterization of
  copyless cost register automata.
\newblock In {\em 24th {EACSL} Annual Conference on Computer Science Logic,
  {CSL} 2015, September 7-10, 2015, Berlin, Germany}, pages 144--159, 2015.
\newblock URL: \url{https://doi.org/10.4230/LIPIcs.CSL.2015.144}, \href
  {http://dx.doi.org/10.4230/LIPIcs.CSL.2015.144}
  {\path{doi:10.4230/LIPIcs.CSL.2015.144}}.

\bibitem{MazowieckiR18}
Filip Mazowiecki and Cristian Riveros.
\newblock Pumping lemmas for weighted automata.
\newblock In {\em 35th Symposium on Theoretical Aspects of Computer Science,
  {STACS} 2018, February 28 to March 3, 2018, Caen, France}, pages 50:1--50:14,
  2018.
\newblock URL: \url{https://doi.org/10.4230/LIPIcs.STACS.2018.50}, \href
  {http://dx.doi.org/10.4230/LIPIcs.STACS.2018.50}
  {\path{doi:10.4230/LIPIcs.STACS.2018.50}}.

\bibitem{MR18}
Filip Mazowiecki and Cristian Riveros.
\newblock Copyless cost-register automata: Structure, expressiveness, and
  closure properties.
\newblock {\em Journal of Computer and System Sciences}, 100:1--29, 2019.
\newblock URL: \url{https://doi.org/10.1016/j.jcss.2018.07.002}, \href
  {http://dx.doi.org/10.1016/j.jcss.2018.07.002}
  {\path{doi:10.1016/j.jcss.2018.07.002}}.

\bibitem{OuaknineW14}
Jo{\"{e}}l Ouaknine and James Worrell.
\newblock On the positivity problem for simple linear recurrence sequences,.
\newblock In {\em Automata, Languages, and Programming - 41st International
  Colloquium, {ICALP} 2014, Copenhagen, Denmark, July 8-11, 2014, Proceedings,
  Part {II}}, pages 318--329, 2014.
\newblock URL: \url{https://doi.org/10.1007/978-3-662-43951-7\_27}, \href
  {http://dx.doi.org/10.1007/978-3-662-43951-7\_27}
  {\path{doi:10.1007/978-3-662-43951-7\_27}}.

\bibitem{OuaknineW14a}
Jo{\"{e}}l Ouaknine and James Worrell.
\newblock Ultimate positivity is decidable for simple linear recurrence
  sequences.
\newblock In {\em Automata, Languages, and Programming - 41st International
  Colloquium, {ICALP} 2014, Copenhagen, Denmark, July 8-11, 2014, Proceedings,
  Part {II}}, pages 330--341, 2014.
\newblock URL: \url{https://doi.org/10.1007/978-3-662-43951-7\_28}, \href
  {http://dx.doi.org/10.1007/978-3-662-43951-7\_28}
  {\path{doi:10.1007/978-3-662-43951-7\_28}}.

\bibitem{OuaknineW15}
Jo{\"{e}}l Ouaknine and James Worrell.
\newblock On linear recurrence sequences and loop termination.
\newblock {\em {SIGLOG} News}, 2(2):4--13, 2015.
\newblock URL: \url{http://doi.acm.org/10.1145/2766189.2766191}, \href
  {http://dx.doi.org/10.1145/2766189.2766191}
  {\path{doi:10.1145/2766189.2766191}}.

\bibitem{RebihaMM14b}
Rachid Rebiha, Arnaldo~Vieira Moura, and Nadir Matringe.
\newblock On the termination of linear and affine programs over the integers.
\newblock {\em CoRR}, abs/1409.4230, 2014.
\newblock URL: \url{http://arxiv.org/abs/1409.4230}, \href
  {http://arxiv.org/abs/1409.4230} {\path{arXiv:1409.4230}}.

\bibitem{Schutzenberger61b}
Marcel~Paul Sch{\"{u}}tzenberger.
\newblock On the definition of a family of automata.
\newblock {\em Information and Control}, 4(2-3):245--270, 1961.
\newblock URL: \url{https://doi.org/10.1016/S0019-9958(61)80020-X}, \href
  {http://dx.doi.org/10.1016/S0019-9958(61)80020-X}
  {\path{doi:10.1016/S0019-9958(61)80020-X}}.

\bibitem{tao2008structure}
Terence Tao.
\newblock {\em Structure and randomness: pages from year one of a mathematical
  blog}.
\newblock American Mathematical Society Providence, RI, 2008.

\end{thebibliography}

\end{document}